\documentclass[10pt,journal]{IEEEtran}

\usepackage{amsmath,amssymb,amsfonts}
\usepackage{amsthm}
\usepackage{bm}

\usepackage{graphicx}
\usepackage[caption=false,font=footnotesize]{subfig}

\usepackage{booktabs}
\usepackage{multirow}
\usepackage{array}

\usepackage{algorithm}
\usepackage{algorithmic}

\usepackage{enumerate}

\usepackage{cite}

\usepackage{url}

\usepackage{balance}

\usepackage{hyperref}
\hypersetup{
  colorlinks=true,
  linkcolor=blue,
  citecolor=blue,
  urlcolor=black,
}

\newtheorem{assumption}{\textbf{Assumption}}
\newtheorem{proposition}{\textbf{Proposition}}
\newtheorem{theorem}{\textbf{Theorem}}

\theoremstyle{remark}
\newtheorem{remark}{\textbf{Remark}}

\expandafter\def\expandafter\normalsize\expandafter{%
  \normalsize%
  \setlength\abovedisplayskip{4pt}%
  \setlength\belowdisplayskip{4pt}%
  \setlength\abovedisplayshortskip{2pt}%
  \setlength\belowdisplayshortskip{2pt}%
}

\begin{document}

\title{Site-Specific Learning for Low-Overhead Multi-User MIMO Beamforming}

\author{Cheng-Jie Zhao, Zhaolin Wang,~\IEEEmembership{Member,~IEEE},
  Zongyao Zhao,~\IEEEmembership{Member,~IEEE},
  and Yuanwei Liu,~\IEEEmembership{Fellow,~IEEE}%
  \vspace{-0.6cm}
  \thanks{The authors are with the Department of Electrical and Computer
Engineering, The University of Hong Kong, Hong Kong
(e-mail: chengjie\_zhao@connect.hku.hk, zhaolin.wang@hku.hk,
zongyao@hku.hk, yuanwei@hku.hk).}}
\maketitle

\begin{abstract}
A low-overhead site-specific multi-user multiple-input multiple-output (MU-MIMO) beamforming framework is proposed. Conventional limited-feedback MU-MIMO relies on channel state information reference signal (CSI-RS) transmission and user feedback before grouping and beamforming, which requires substantial online overhead when the antenna dimension and candidate-user pool are large. To reduce this burden, the proposed framework exploits site-specific information (SSI), which captures local radio propagation features. By learning the mapping from low-overhead beam-domain observations to effective transmit spatial subspaces of users, the BS can infer inter-user separability before high-resolution CSI acquisition and construct a compact group-level CSI acquisition subspace for the selected users. This site-specific design can be implemented within the standard limited-feedback procedure using synchronization signal block (SSB)-based reference signal received power (RSRP) fingerprints for subspace inference and CSI-RS feedback for low-dimensional CSI refinement. Extensive numerical results demonstrate that the proposed framework can identify compatible user groups before CSI-RS acquisition, preserve most scheduled-user channel energy in a compact group subspace, and achieve higher effective rates than conventional systems with significantly lower overhead and user-side processing burden.
\end{abstract}

\begin{IEEEkeywords}
Limited feedback systems, MU-MIMO, site-specific learning.
\end{IEEEkeywords}
\vspace{-0.3cm}
\section{Introduction}
\IEEEPARstart{M}{ULTI-USER} multiple-input multiple-output (MU-MIMO) is a fundamental physical-layer technique for improving the spectral efficiency of modern cellular networks~\cite{Larsson2014MassiveMIMO,Heath2016mmWave}. By exploiting the spatial degrees of freedom provided by large antenna arrays, a base station (BS) can transmit multiple data streams over the same time-frequency resource and support dense downlink traffic without requiring proportional growth in bandwidth or time resources. The achievable multiplexing gain depends not only on the number of transmitted streams, but also on the spatial separability of the users served on the same resource block. When their channels are sufficiently distinguishable at the BS array, spatial beamforming can suppress inter-user interference (IUI) and convert the available spatial degrees of freedom into rate gain. When the scheduled users are strongly coupled, the effective channel matrix becomes poorly conditioned and additional streams may provide limited spectral-efficiency improvement. This observation leads to two coupled tasks in MU-MIMO systems: user scheduling and user multiplexing. User scheduling determines which users should share the same time-frequency resource, while user multiplexing designs the beamforming vectors that separate the scheduled users and suppress IUI.
\\
\indent The user-multiplexing task has been extensively studied through MU-MIMO transmission optimization. Depending on the system model, the BS jointly designs beamforming matrices, transmit powers, antenna positions, or other controllable variables to improve sum-rate, energy efficiency, or other quality-of-service satisfaction~\cite{Christensen2008WMMSE,Sun2018NOMAMIMO,Zhao2026PRAFD}. These studies show how additional transmit-side degrees of freedom can be exploited in a configured MU link and have led to widely used tools such as weighted minimum mean square error (MMSE) optimization and successive convex approximation. Their starting point is a specified transmission configuration, so they focus on extracting performance after the users and resources have already entered the design problem. They do not resolve the complementary scheduling task of identifying which users have sufficiently separable spatial channels for joint MU transmission.
\\
\indent User selection and grouping methods address this scheduling task more directly. Semi-orthogonal user selection (SUS), for example, greedily selects users from instantaneous channel state information (CSI) and is asymptotically optimal for zero-forcing (ZF) beamforming when the candidate pool is large~\cite{Yoo2006SUS}. Beyond SUS, CSI-domain MU scheduling and linear beamforming studies commonly use channel correlation, projected channel gain, or effective SINR to avoid pairing users that would create strong IUI after beamforming~\cite{Spencer2004ZF,Wagner2012RZF}. This line of work makes clear that spatial separability is a key part of MU-MIMO operation. The limitation is that the compatibility metric itself is computed from instantaneous CSI of the candidate users. Since such CSI must first be obtained through reference signal (RS) transmission and subsequent estimation/feedback, CSI-domain grouping requires acquisition resources to be spent before the BS can decide which users are suitable for MU transmission. This becomes costly when the candidate-user pool is large and time-frequency and hardware resources are limited.
\\
\indent Existing work has therefore developed several ways to reduce or structure the CSI acquisition burden. One route is statistical or covariance-based MU-MIMO. Joint spatial division and multiplexing (JSDM) groups users according to long-term covariance eigenspaces and reduces the dimension of subsequent MU processing~\cite{Adhikary2013JSDM}. Related subspace analysis also shows that beamforming designs often admit compact low-dimensional representations~\cite{Zhao2025LDS}. These approaches demonstrate that long-term spatial subspace information can reduce the dimension of MU-MIMO acquisition and beamforming. However, they usually assume that the relevant covariance matrices or group eigenspaces are already known from long-term estimation. This leaves open how a BS should obtain useful group-level subspace information from the low-overhead measurements available before dedicated CSI acquisition.
\\
\indent Another route is limited-feedback CSI acquisition. Limited feedback requires users measure downlink RSs, compress the observed CSI, and report it to the BS~\cite{Love2008LimitedFeedback,Jindal2006Feedback}. In fifth-generation (5G) new radio (NR) systems, CSI reporting is supported by several codebook families, such as Type-I, Type-II, and port-selection codebooks, which provide different tradeoffs between feedback overhead and beamforming accuracy~\cite{3GPP38211,3GPP38214,3GPP38215}. Among them, the Type-II codebook is one of the most expressive options, representing the channel with selected beam indices and complex coefficients and thereby improving beamforming accuracy in rich multipath channels~\cite{Giordani2019BeamManagementNR}. Nevertheless, the gain in feedback resolution comes with additional CSI-RS measurement, codebook search, and reporting burden. As antenna arrays and candidate-user pools grow, this burden scales with both the CSI-RS dimension and the number of users to be screened, making CSI acquisition and feedback a major bottleneck in limited-feedback MU-MIMO.
\\
\indent Site-specific learning provides a solution to address the overhead bottleneck in limited-feedback systems by exploiting \emph{site-specific information (SSI)}, i.e., the propagation regularities induced by a fixed deployment \cite{Zeng2020CKM}. In a fixed site, the surrounding geometry, dominant scatterers, blockage patterns, and BS deployment vary slowly compared with instantaneous fading, making the propagation distribution learnable from site-specific data. Along this direction, generative site-specific beamforming has been developed to synthesize environment-conditioned beams from learned deployment-specific channel distributions~\cite{sim}. A site-specific Type-II feedback framework was proposed in~\cite{prior_singleuser}, where synchronization signal block (SSB)-based reference signal received power (RSRP) fingerprints are used to infer a user's dominant transmit subspace before dedicated CSI-RS transmission. Cross-site pretraining and target-site calibration were further studied in SiFo~\cite{SiFo} to improve the scalability of site-specific CSI feedback across different deployments. Related studies also support this principle through site-specific probing for fast beam alignment~\cite{Heng2022SiteSpecificProbing}, RSRP-adaptive codebook design~\cite{Ning2023RSRPCodebook}, and deep-learning-assisted beam selection~\cite{Alkhateeb2018DLBeam}. Taken together, these studies show that RSRP fingerprints and other beam-domain measurements can serve as low-overhead observations of SSI before high-resolution CSI feedback, thereby reducing online sensing, search, and feedback overhead. However, they mainly target beam alignment, codebook selection, or single-link CSI acquisition rather than MU-level site-specific beamforming.
\\
\indent Moving from single-link CSI acquisition to MU-MIMO beamforming changes the role of SSI. In single-user feedback, SSI mainly helps determine how to acquire one user's CSI with fewer online resources. In MU transmission, SSI must provide additional guidance before detailed CSI is available, including which users are suitable for spatial multiplexing and how the limited CSI-RS resources should be shared by the scheduled group. This is more challenging because users with individually strong channels may still suffer severe IUI when their spatial signatures overlap, while independently acquiring each scheduled user's CSI would largely offset the overhead reduction enabled by site-specific learning. Therefore, MU-oriented SSI must be converted into group-aware information that supports both compatibility inference and shared CSI acquisition. In this paper, we implement this idea by extending the RSRP-based predictor developed in the single-user site-specific limited feedback framework \cite{prior_singleuser} to infer per-user dominant spatial support from low-overhead SSB measurements, and then use the inferred support to select compatible users, construct a compact group CSI-RS acquisition space, and then perform corresponding MU beamforming. The main contributions are summarized as follows.
\begin{itemize}
	\item We develop a site-specific MU-MIMO feedback framework with three stages: 1) SSB-based initial access, 2) CSI-RS-based CSI refinement, and 3) low-dimensional regularized ZF (RZF) MU beamforming. By exploiting the SSB-stage RSRP fingerprints for MU grouping and group-specific CSI-RS design, the framework reduces CSI acquisition and feedback overhead while preserving MU transmission performance, thereby improving the final effective rate.

  	\item We propose pre-CSI MU compatibility inference and group selection from RSRP-predicted dominant subspaces. The method evaluates user compatibility before CSI-RS acquisition and forms a scheduled group through a low-complexity greedy rule. We further relate the resulting compatibility metric to CSI-domain channel coupling, providing a theoretical basis for the proposed grouping method.

	\item We design a low-dimensional group CSI acquisition approach for the selected users. The proposed construction compresses the scheduled users' predicted subspaces into a compact group-shared CSI-RS subspace, preserving user channel energy, improving effective-channel conditioning for RZF beamforming, and reducing the complexity of MU beamforming matrix computation.

  	\item Extensive experiments on DeepMIMO scenarios validate the proposed framework from compatibility inference to MU transmission. The results show that the proposed method can identify compatible user groups before CSI-RS acquisition, preserve most scheduled-user channel energy, and achieve higher effective rates than conventional limited-feedback baselines.
\end{itemize}

\indent The remainder of this paper is organized as follows. Section~\ref{sec:system} introduces the system model and formulates the optimization problem. Section~\ref{sec:pipeline} reviews the conventional limited-feedback pipeline and introduces the proposed site-specific MU-MIMO framework. Section~\ref{sec:compatibility} develops pre-CSI MU compatibility inference and greedy group selection. Section~\ref{sec:group_csi} presents the group CSI-RS subspace construction, effective-channel feedback, and low-dimensional RZF beamforming. Section~\ref{sec:results} reports numerical results on DeepMIMO scenarios. Section~\ref{sec:conclusion} concludes the paper.
\\ \indent \emph{Notation:} Scalars, vectors, matrices, and sets are denoted by italic letters, boldface lowercase letters, boldface uppercase letters, and calligraphic letters, respectively; $|\mathcal{A}|$ denotes the cardinality of $\mathcal{A}$. $(\cdot)^T$ and $(\cdot)^H$ denote the transpose and Hermitian transpose. $\|\mathbf{x}\|$, $\|\mathbf{X}\|_F$, and $\|\mathbf{X}\|_2$ denote the Euclidean norm, Frobenius norm, and spectral norm, respectively. $\mathrm{tr}(\cdot)$ is the trace operator, $\mathbf{I}_n$ the $n\times n$ identity matrix, and $\mathbb{E}[\cdot]$ the expectation operator. $[N]\triangleq\{1,\ldots,N\}$ denotes the index set of the first $N$ positive integers. $\mathcal{CN}(\bm{\mu}, \bm{\Sigma})$ denotes the circularly symmetric complex Gaussian distribution with mean $\bm{\mu}$ and covariance $\bm{\Sigma}$.
\vspace{-0.3cm}
\section{System Model and Problem Formulation}
\label{sec:system}
We consider a single-cell downlink MU-MIMO system, where a BS equipped with $N_t$ fully digital transmit antennas serves a candidate pool $\mathcal{K}_{\rm{full}}$ of single-antenna users. The downlink channel is assumed to be block fading, i.e., approximately constant within each coherence interval. Hence, one CSI acquisition period is carried out in each coherence interval. In this paper, $\mathcal{K}_{\rm full}$ is assumed to be determined by higher-layer mechanisms such as connection management and long-term scheduling, which is outside the scope of this work. During each coherence interval, the BS selects a group of $K$ users, which are in subset $\mathcal{K} \subseteq \mathcal{K}_{\rm{full}}$, for simultaneous spatial multiplexing, and assigns each scheduled user $k \in \mathcal{K}$ a dedicated beamforming vector $\mathbf{w}_k \in \mathbb{C}^{N_t\times 1}$.
\vspace{-0.5cm}
\subsection{Channel and Signal Model}
Following the geometric scattering representation~\cite{Ayach2014SpatiallySparse}, the downlink channel vector of user $k$, denoted by $\mathbf{h}_k \in \mathbb{C}^{N_t}$, can be modeled as a superposition of $L$ dominant paths:
\begin{equation}
  \mathbf{h}_k = \sum_{l=1}^{L} \alpha_{k,l}\, \mathbf{a}(\phi_{k,l}),
  \label{eq:channel}
\end{equation}
where $\alpha_{k,l} \sim \mathcal{CN}(0,\sigma_{k,l}^2)$ is the complex gain of the $l$-th path and $\mathbf{a}(\phi_{k,l}) \in \mathbb{C}^{N_t}$ is the transmit steering vector at the angle of departure $\phi_{k,l}$. A uniform linear array (ULA) is assumed in this paper with inter-element spacing $d$ and wavelength $\lambda$. Hence, the steering vector takes the form
\begin{equation}
	\mathbf{a}(\phi_{k,l}) = \mathbf{a}(u_{k,l}) = \frac{1}{\sqrt{N_t}} \bigl[1,\, e^{j2\pi u_{k,l}},\, \ldots,\, e^{j2\pi(N_t-1)u_{k,l}}\bigr]^T,
	\label{eq:steering}
\end{equation}
where $u_{k,l} \triangleq (d/\lambda)\sin(\phi_{k,l})$ is the normalized spatial frequency. By assuming that $L$ path gains $\left\lbrace \alpha_{k,l}\right\rbrace_{l \in [L]}$ are independent zero-mean complex Gaussian, $\mathbf{h}_k$ is zero-mean complex Gaussian with channel covariance $\bm{\Sigma}_k \triangleq \mathbb{E}[\mathbf{h}_k\mathbf{h}_k^H]$ of rank $L \ll N_t$~\cite{Adhikary2013JSDM}. We capture this low-rank structure through the following subspace model.
\begin{assumption}[\emph{Low-Rank Channel Model}]
\label{asm:channel}\normalfont
The channel of user $k$ lies in a $Q$-dimensional subspace with $Q \ll N_t$, i.e., $\mathbf{h}_k = \mathbf{U}_k\mathbf{g}_k$, where $\mathbf{U}_k \in \mathbb{C}^{N_t \times Q}$ has orthonormal columns and $\mathbf{g}_k \sim \mathcal{CN}(\mathbf{0}, \mathbf{R}_k)$ with $\mathrm{tr}(\mathbf{R}_k) = \beta_k$ being the power of channel $\mathbf{h}_k$.
\end{assumption}
\indent Under \textbf{Assumption~\ref{asm:channel}}, the channel covariance factorizes as
\begin{equation}
	\bm{\Sigma}_k = \mathbf{U}_k\mathbf{R}_k\mathbf{U}_k^H,
	\label{eq:covariance}
\end{equation}
which is of rank $Q$ with all energy confined to the subspace $\mathcal{U}_k=\mathrm{span}(\mathbf{U}_k)$.
\begin{remark}[\emph{Geometric instantiation}] \normalfont
The geometric scattering model~\eqref{eq:channel} instantiates \textbf{Assumption~\ref{asm:channel}}. Compactly, collecting $\mathbf{A}_k = [\mathbf{a}(u_{k,1}),\ldots,\mathbf{a}(u_{k,L})] \in \mathbb{C}^{N_t \times L}$ and $\boldsymbol{\alpha} = [\alpha_{k,1},\ldots,\alpha_{k,L}] \in \mathbb{C}^{L}$ yields $\mathbf{h}_k = \mathbf{A}_k\bm{\alpha}_k$. Therefore, the covariance matrix is given by $\bm{\Sigma}_k = \mathbf{A}_k\mathbf{D}_k\mathbf{A}_k^H$, where $\mathbf{D}_k = \mathrm{diag}(\sigma_{k,1}^2,\ldots,\sigma_{k,L}^2)$. Since $\bm{\Sigma}_k$ has rank at most $L$, the subspace dimension satisfies $Q \leq L$. In particular, when the $L$ normalized spatial frequencies $\{u_{k,l}\}$ are well-separated~\cite{Adhikary2013JSDM,3GPPTR38901}, we have $\mathbf{A}_k^H\mathbf{A}_k \approx \mathbf{I}_L$~\cite{Ayach2014SpatiallySparse} and $\bm{\Sigma}_k$ has full rank $L$, so the $Q = L$ instance of \textbf{Assumption~\ref{asm:channel}} applies with $ \mathrm{span}(\mathbf{A}_k)=\mathcal{U}_k$ and $\mathbf{R}_k \approx \mathbf{D}_k$.
\end{remark}

For any low-dimensional subspace $\mathcal{V}\subseteq \mathbb{C}^{N_t}$ with basis $\mathbf{V}\in\mathbb{C}^{N_t\times d_{\mathcal V}}$, representing $\mathbf{h}_k$ within $\mathcal{V}$ retains only its projected component. The retained channel energy is measured by the \emph{CSI-capture efficiency}~\cite{SiFo,prior_singleuser}:
\begin{equation}
	\eta_k(\mathcal{V}) = \frac{\|\mathbf{P}_{\mathcal{V}}\mathbf{h}_k\|^2}{\|\mathbf{h}_k\|^2} \in [0,1],
	\label{eq:capture_eff}
\end{equation}
where $\mathbf{P}_{\mathcal{V}}\triangleq \mathbf{V}(\mathbf{V}^H\mathbf{V})^{-1}\mathbf{V}^H$ denotes the orthogonal projector onto $\mathcal{V}$. Under \textbf{Assumption~\ref{asm:channel}}, the channel $\mathbf{h}_k$ lies in the $Q$-dimensional dominant subspace $\mathcal{U}_k$. Therefore, projection onto $\mathcal{U}_k$ is lossless, i.e., $\eta_k(\mathcal{U}_k)=1$ with $\mathbf{P}_{\mathcal{U}_k}=\mathbf{U}_k\mathbf{U}_k^H$, whereas any subspace that misses part of $\mathcal{U}_k$ discards the corresponding channel component.
\\ \indent The BS transmits a single data stream to each scheduled user. At each channel use, the data symbol for user $k$ is denoted by $s_k \sim \mathcal{CN}(0, P_t/K)$, with symbols independently generated across users and channel uses and with equal per-stream power $P_t/K$. Stacking the beamforming vectors into the beamforming matrix $\mathbf{W} = [\mathbf{w}_1,\dots,\mathbf{w}_K] \in \mathbb{C}^{N_t \times K}$ and the symbols into the symbol vector $\mathbf{s} = [s_1,\dots,s_K]^T \in \mathbb{C}^{K\times 1}$, the transmitted signal is given by $\mathbf{x} = \mathbf{W}\mathbf{s}$. The received signal at user $k$ is then represented as
\begin{equation}
	y_k = \mathbf{h}_k^H \mathbf{w}_k s_k + \sum_{j \in \mathcal{K},\, j \neq k} \mathbf{h}_k^H \mathbf{w}_j s_j + n_k,
	\label{eq:received}
\end{equation}
where $n_k \sim \mathcal{CN}(0, \sigma_n^2)$ is additive white Gaussian noise independent of the data symbols. The signal-to-interference-plus-noise ratio (SINR) at user $k$ is given by
\begin{equation}
  {\rm SINR}_k(\mathbf{W}) = \frac{(P_t/K)\,|\mathbf{h}_k^H \mathbf{w}_k|^2}{\sigma_n^2 + (P_t/K) \sum_{j \in \mathcal{K},\, j \neq k} |\mathbf{h}_k^H \mathbf{w}_j|^2}.
  \label{eq:sinr}
\end{equation}
The corresponding achievable rate is
\begin{equation}
	R_k(\mathbf{W}) = \log_2(1 + {\rm SINR}_k(\mathbf{W})).
\end{equation}

\subsection{Problem Formulation}
\label{sec:problem}
Within a single coherence interval, the BS aims to maximize the total achievable rate by jointly selecting a compatible user group $\mathcal{K}$ and designing the corresponding beamforming matrix $\mathbf{W}$. The sum-rate maximization problem can be formulated as follows:
\begin{equation}
	\max_{\mathcal{K},\;\mathbf{W}}\;  \sum_{k \in \mathcal{K}} R_k(\mathbf{W}), \; \mathrm{s.t.}\  \|\mathbf{w}_k\| = 1, \mathcal{K} \subseteq \mathcal{K}_{\rm full}, |\mathcal{K}|=K.
  \label{eq:opt_prob}
\end{equation}
Solving~\eqref{eq:opt_prob} directly is intrinsically intractable for two main reasons. First, the objective depends on the instantaneous downlink CSI of all users in the candidate pool, i.e., $\{\mathbf{h}_k\}_{k \in \mathcal{K}_{\rm full}}$. Without the CSI, the BS cannot evaluate the user rates, compare the spatial separability of different user groups, and finally compute the corresponding beamforming matrix. Second, even if instantaneous CSI were available, the joint optimization over user grouping and beamforming remains computationally challenging. Concretely, the group-selection is combinatorial, since the BS must choose a subset from $\mathcal{K}_{\rm full}$, and the beamforming design for sum-rate maximization is generally non-convex due to the IUI.
\\ \indent In practical 5G NR systems, the difficulty of instantaneous CSI acquisition is addressed through limited-feedback CSI reporting. Then, the original problem in \eqref{eq:opt_prob} is turned into a sequential CSI acquisition, feedback, grouping, and beamforming joint design problem. The next section reviews this limited-feedback pipeline and introduces how the SSI can be incorporated to reduce the CSI acquisition burden.
\vspace{-0.3cm}
\section{Limited Feedback Pipeline and Proposed Site-Specific Framework}
\label{sec:pipeline}
This section reviews the conventional NR limited-feedback pipeline and introduces the proposed site-specific framework, contrasting how each addresses the CSI dependency identified in problem~\eqref{eq:opt_prob}.
\vspace{-0.3cm}
\subsection{Conventional Limited Feedback Pipeline}
In practical NR systems, MU-MIMO transmission is not realized by directly solving problem~\eqref{eq:opt_prob} over all possible user groups and beamforming matrices. Instead, it follows a limited-feedback pipeline that first obtains beam-level measurements for initial access, then refines the downlink CSI through CSI-RS transmission and user feedback, and finally selects compatible users for joint beamforming. Specifically, we abstract and separate the conventional pipeline into the following three stages.
\\ \indent
\textbf{Stage 1 (SSB-based Initial Access):} The procedure starts with SSB beam sweeping, where the BS transmits a predefined set of synchronization beams $\mathbf{B}=[\mathbf{b}_1,\ldots,\mathbf{b}_B]\in\mathbb{C}^{N_t\times B}$, which are typically discrete Fourier transform (DFT) beams, and each user measures the RSRP of these beams for initial access and coarse directional acquisition. For the $i$-th SSB beam, the received signal at the $k$-th user is
\begin{equation}
  \mathbf{y}_{k,i}^{\rm SSB} = \sqrt{P_{\rm SSB}}\,\mathbf{h}_k^H \mathbf{b}_i\, \mathbf{s}_{\rm SSB} + \mathbf{n}_{k,i}^{\rm SSB},
  \label{eq:ssb_received}
\end{equation}
where $P_{\rm SSB}$ is the SSB transmit power, $\mathbf{s}_{\rm SSB}\in\mathbb{C}^{L_s \times 1}$ collects the $L_s$ SSB symbols, and $\mathbf{n}_{k,i}^{\rm SSB}\sim\mathcal{CN}(\mathbf{0}, \sigma_n^2\mathbf{I}_{L_s})$\footnote{SSB symbols occupy specific time-frequency resource elements in 5G NR. For analytical compactness they are aggregated into a vector and their time-frequency indices are not separately tracked.}. Since SSB is a cell-wide broadcast, all users measure from the same downlink transmission and no IUI arises. For each SSB beam, the user averages the received signal power over the corresponding SSB symbols and obtains the RSRP as
\begin{equation}
	\mathrm{RSRP}_{k,i} \triangleq \frac{1}{L_s}\bigl\|\mathbf{y}_{k,i}^{\rm SSB}\bigr\|^2 .
	\label{eq:rsrp_linear}
\end{equation}
Following~\cite{sim}, the dB-domain RSRP can be modeled as
\begin{equation}
	r_{k,i} = 10\log_{10}(\mathrm{RSRP}_{k,i}) = r_{k,i}^0 + n_{k,i},
	\label{eq:rsrp}
\end{equation}
where $r_{k,i}^0$ is the noise-free RSRP and $n_{k,i}$ is the Gaussian perturbation capturing thermal noise and shadowing variability. Detailed expressions for $r_{k,i}^0$ and $n_{k,i}$ can refer to Eq. (7), Eq. (8a), and Eq. (8b) in \cite{sim}. Collecting the $B$ measurements over all SSB beams gives the beam-domain RSRP fingerprint $\mathbf{r}_k = [r_{k,1},\ldots,r_{k,B}]^T \in \mathbb{R}^{B \times 1}$ for user $k$. This fingerprint provides coarse beam-level information and mainly supports initial access and beam management. We assume only the index and RSRP of the strongest beam is reported in the conventional pipeline.
\\ \indent
\textbf{Stage 2 (CSI-RS-based CSI Refinement):} By definition, the RSRP fingerprint measured with SSB beams does not directly provide the phase-coherent channel direction required to construct high-resolution downlink beamforming vectors. To support data transmission, NR systems therefore proceed to a CSI refinement stage using CSI-RS~\cite{Giordani2019BeamManagementNR,3GPP38214,3GPP38215, prior_singleuser}. Type-II feedback is considered in this paper. Detailed analysis regarding Type-I, Type-II, and port selection feedback schemes is provided in~\cite{prior_singleuser}.
\\ \indent  
The CSI refinement stage with Type-II feedback consists of three steps. \emph{(i)} The BS transmits non-precoded CSI-RSs over $L_c \geq N_t$ pilot channel uses, with per-use transmit vector drawn from the $N_t$-port pilot matrix $\mathbf{S}_{\rm CSI}\in\mathbb{C}^{N_t\times L_c}$ satisfying $\mathbf{S}_{\rm CSI}\mathbf{S}_{\rm CSI}^H = \mathbf{I}_{N_t}$. The aggregate received signal at user $k$ is
\begin{equation}
	\mathbf{y}_k^{\rm CSI} = \sqrt{P_{\rm CSI}}\,\mathbf{S}_{\rm CSI}^H\mathbf{h}_k + \mathbf{n}_k^{\rm CSI} \in \mathbb{C}^{L_c \times 1},
	\label{eq:csi_received}
\end{equation}
where $P_{\rm CSI}$ is the CSI-RS transmit power and $\mathbf{n}_k^{\rm CSI}\sim\mathcal{CN}(\mathbf{0},\sigma_n^2\mathbf{I}_{L_c})$. Similar to SSB, the non-precoded CSI-RS is transmitted to all candidate users in $\mathcal{K}_{\rm full}$ simultaneously, all candidates measure from the same downlink transmission and no IUI arises. \emph{(ii)} Each candidate user estimates the downlink channel $\mathbf{h}_k$ from~\eqref{eq:csi_received} via standard channel estimators, e.g., least-squares (LS). Perfect channel estimation is assumed here to isolate the effect of codebook compression and feedback. \emph{(iii)} Instead of feeding back the full channel vector, the user selects $Q$ dominant columns from the oversampled DFT codebook $\mathbf{D}\in\mathbb{C}^{N_t\times O_D N_t}$ with oversampling rate $O_D$~\cite{3GPP38214,3GPP38215} to approximately span the dominant $Q$-dimensional subspace $\mathcal{U}_k$ in \textbf{Assumption~\ref{asm:channel}}. The beam index set selected by user $k$ is denoted by
\begin{equation}
	\mathcal{S}_k = \underset{\mathcal{S}\subseteq[O_D N_t],\;|\mathcal{S}|=Q}{\arg\max}\;\bigl\|\mathbf{D}_{\mathcal{S}}^H\mathbf{h}_k\bigr\|^2,
	\label{eq:beam_select}
\end{equation}
where $\mathbf{D}_{\mathcal{S}}\in\mathbb{C}^{N_t\times Q}$ collects the columns of $\mathbf{D}$ indexed by $\mathcal{S}\subseteq[O_D N_t]$. The user then computes the combining coefficients via LS projection onto the selected subspace:
\begin{equation}
	\boldsymbol{\alpha}_k = \mathbf{D}_{\mathcal{S}_k}^\dagger\,\mathbf{h}_k \in \mathbb{C}^Q.
	\label{eq:typeii_coeff}
\end{equation}
Subsequently, $(\mathcal{S}_k,\hat{\boldsymbol{\alpha}}_k)$ are reported back to the BS~\cite{3GPP38214,3GPP38215}, with which the BS reconstructs $\hat{\mathbf{h}}_k = \mathbf{D}_{\mathcal{S}_k}\hat{\boldsymbol{\alpha}}_k$.
\\ \indent
\textbf{Stage 3 (Full-Dimensional MU Beamforming):} After receiving the CSI reports, in the single-user case, the BS directly applies maximum ratio transmission (MRT), setting $\mathbf{w}_k^{\rm single} = \hat{\mathbf{h}}_k / \|\hat{\mathbf{h}}_k\|$ to maximize the received signal-to-noise ratio (SNR). In the multi-user case, however, simultaneous transmission to multiple users introduces IUI, whose severity depends on how well the channels of the co-scheduled users are separated. The BS must therefore first assess inter-user separability among candidates before committing to a group and a beamforming matrix. Concretely, it compares the reconstructed channel directions and selects a compatible group, for example using SUS~\cite{Yoo2006SUS}. Given the selected group $\mathcal{K}$ and its reconstructed CSI, the BS computes a full-dimensional beamforming matrix $\mathbf{W}$ for joint downlink transmission. In this paper, RZF~\cite{Peel2005RZF} is used as the reference full-dimensional beamforming method, so the user-specific beamforming vectors $\{\mathbf{w}_k\}$ are obtained from the RZF solution.
\vspace{-0.5cm}
\subsection{Problem Reformulation for Limited Feedback}
\label{sec:overhead}
The limited-feedback pipeline adopted in NR systems resolves the instantaneous CSI dependency in problem~\eqref{eq:opt_prob} through sequential CSI-RS transmission and corresponding feedback, while practical user grouping and linear beamforming replace the joint non-convex optimization. Under this pipeline, the overhead incurred by CSI acquisition and feedback critically affects the eventually achieved effective rate, since the time-frequency resources occupied by CSI-RS transmission and feedback cannot be used for data delivery. Concretely, the online overhead per coherence interval, measured in channel uses, consists of the SSB-based beam-management overhead $T_{\rm SSB}$ averaged over its update period and the CSI-RS transmission and feedback overhead $T_{\rm CSI\text{-}RS}$. It can be represented as
\begin{equation}
	T_o = \zeta_{\rm BM}T_{\rm SSB} + T_{\rm CSI\text{-}RS},
	\label{eq:overhead}
\end{equation}
where $\zeta_{\rm BM}\in[0,1]$ is a beam-management update-rate factor that accounts for the fact that SSB-based beam management and CSI-RS-based CSI acquisition are updated on different timescales\footnote{SSB-based beam management and CSI-RS-based CSI acquisition are configured on different timescales in NR systems. SSB-RSRP mainly reflects beam-domain power associated with large-scale geometry, path loss, and blockage, which varies more slowly than the phase-coherent CSI needed for beamforming. Therefore, RSRP fingerprints can be updated less frequently and reused across multiple CSI-RS acquisition intervals~\cite{Giordani2019BeamManagementNR,3GPP38214,3GPP38215}.}. Therefore, within the NR limited-feedback pipeline, supposing that one coherence interval contains $T_c$ channel uses, problem~\eqref{eq:opt_prob} is reformulated as the following overhead-aware effective-rate maximization problem:
\begin{subequations}
	\begin{align}
		\max_{\mathcal{K},\;\mathbf{W}}\quad &\left(1 - \frac{T_o}{T_c}\right) \sum_{k \in \mathcal{K}} R_k(\mathbf{W}), \\  
		\mathrm{s.t.}\quad &\|\mathbf{w}_k\| = 1,\ \mathcal{K} \subseteq \mathcal{K}_{\rm full}, |\mathcal{K}|=K.
		\label{eq:ref_prob}
	\end{align}
\end{subequations}
For the conventional Type-II pipeline, the SSB stage sends $B$ SSB beams and requires each candidate user reports the index and RSRP of its strongest SSB beam. The CSI-RS stage transmits $N_t$ pilot dimensions to all candidate users and collects $Q$ index and coefficient pairs from these users. Therefore, the online overhead of the conventional pipeline is given by
\begin{equation}
	T_o^{\rm Conv} = \zeta_{\rm BM}(B+2|\mathcal{K}_{\rm full}|) + N_t + 2Q|\mathcal{K}_{\rm full}|.
	\label{eq:overhead_full}
\end{equation}

\vspace{-0.5cm}
\subsection{Proposed Site-Specific Framework}
The limitations of the conventional pipeline arise mainly from two aspects. Firstly, the discovery of Type-II subspace imposes heavy processing burden on the user. Second, equation~\eqref{eq:overhead_full} reveals that the CSI-RS overhead scales with the number of BS antennas $N_t$ and the size of the candidate pool $|\mathcal{K}_{\rm full}|$, which would be prohibitive with growing $N_t$ and $|\mathcal{K}_{\rm full}|$ for larger beamforming gain and more user diversity. These drawbacks motivate the proposed framework that exploits the SSI to reduce the users' computational burden and the online overhead of CSI acquisition and feedback.

\subsubsection{Single-User Case (Prior Work~\cite{prior_singleuser})}
The single-user site-specific Type-II framework in~\cite{prior_singleuser} mainly keeps the limited-feedback structure reviewed above, but changes how the dominant Type-II subspace is obtained. In~\cite{prior_singleuser}, the dominant-subspace discovery is moved before CSI-RS transmission. More specifically, the BS jointly learns a site-specific SSB probing codebook $\mathbf{B}_\theta$ and subspace predictor $\Psi_\theta:\mathbf{r}_k\mapsto\widehat{\mathbf{U}}_k$, with which the SSB RSRP fingerprint is mapped to a predicted basis for the dominant subspace $\mathcal{U}_k$ in \textbf{Assumption~\ref{asm:channel}}, i.e., $\mathrm{span}(\widehat{\mathbf{U}}_k)\approx\mathcal{U}_k$. The BS then sends CSI-RS within $\mathrm{span}(\widehat{\mathbf{U}}_k)$, and the user estimates only the low-dimensional effective CSI inside this predicted subspace. Therefore, the main online burden is shifted from full-dimensional user-side search and feedback to BS-side subspace inference.
\\ \indent
This prior design shows that RSRP fingerprints can be mapped to dominant transmit subspaces before dedicated CSI-RS transmission. In this paper, we use the same learned codebook $\mathbf{B}_\theta$ and mapping $\Psi_\theta$ as the basic subspace predictor and move the focus from single-link CSI acquisition to multi-user grouping, group CSI acquisition, and MU beamforming.

\begin{figure}[t]
	\centering
	\includegraphics[scale=0.26]{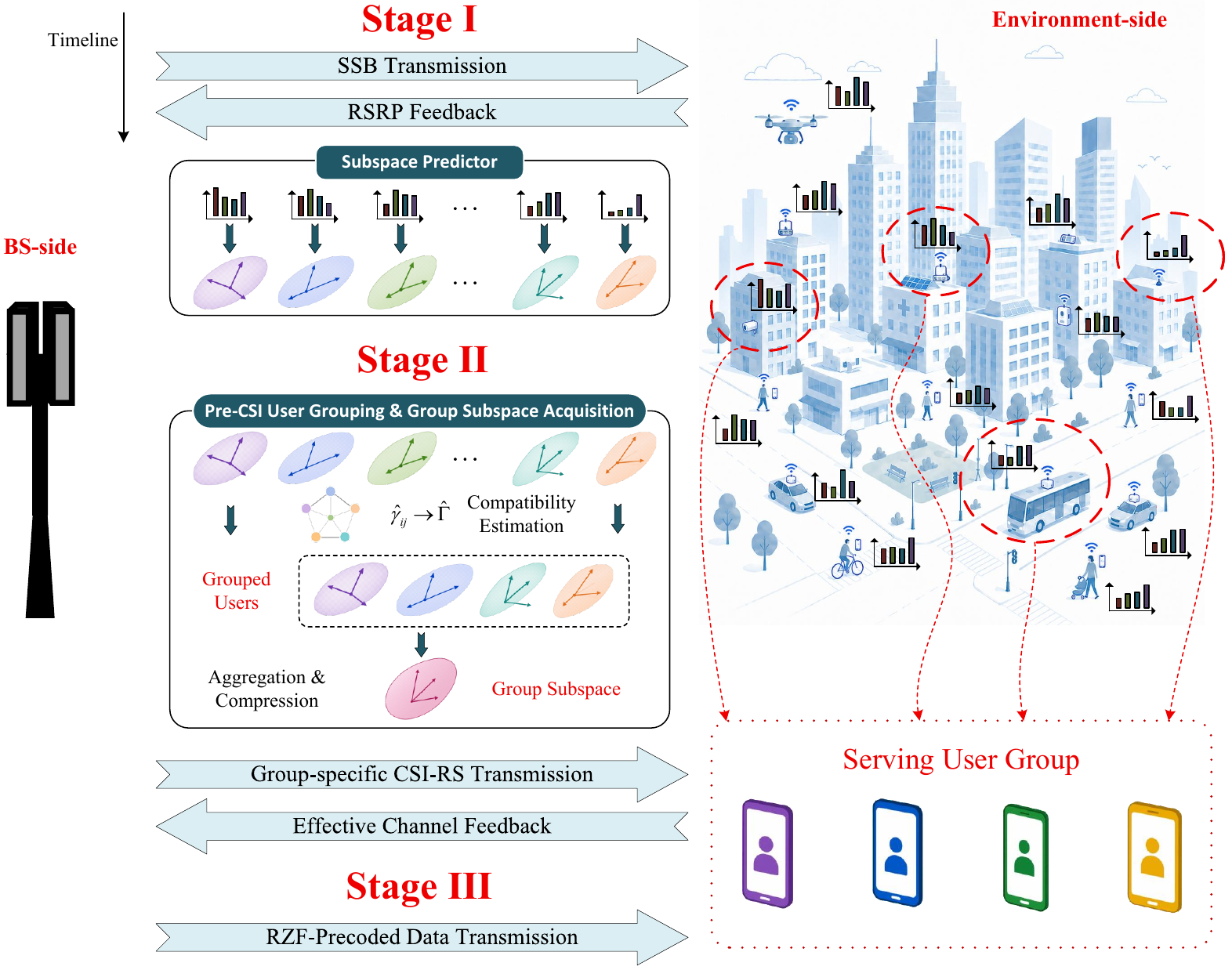}
	\caption{Illustration for the proposed MU beamforming framework.}
	\label{fig:pipeline}
	\vspace{-0.5cm}
\end{figure}

\subsubsection{Multi-User Case}
Although only a subset of users are selected to serve, all users in the candidate pool must feed back their compressed CSI in the conventional pipeline. The proposed framework addresses this bottleneck by exploiting SSI at the BS to screen compatibility and group users before CSI acquisition. Concretely, the proposed site-specific MU beamforming framework consists of the following three stages.
\\ \indent
\textbf{Stage 1 (SSB-based Initial Access):} This stage is identical to the conventional pipeline to establish initial access and collect RSRP fingerprints $\{\mathbf{r}_k\}_{k\in\mathcal{K}_{\rm full}}$ for all candidate users using learned codebook $\mathbf{B}_\theta$. This stage requires $T_{\rm SSB}^{\rm Pro}=B+B|\mathcal{K}_{\rm full}|$ for collecting RSRP from all SSB beams.
\\ \indent
\textbf{Stage 2 (CSI-RS-based CSI Refinement):} Based on the collected RSRP fingerprints, the BS applies the predictor $\Psi_\theta$ to obtain $\{\widehat{\mathbf{U}}_k\}_{k\in\mathcal{K}_{\rm full}}$, evaluates the spatial separability among candidate users, and selects a compatible group $\mathcal{K}\subseteq\mathcal{K}_{\rm full}$ without instantaneous CSI. Then, given the selected group, the BS constructs a shared group acquisition subspace $\mathbf{V}_{\mathcal{K},M}\in\mathbb{C}^{N_t\times M}$, where $M\ll N_t$ is the acquisition dimension configured by the BS, and transmits group-specific CSI-RS over this subspace. Let $\mathbf{S}_{\mathcal{K}}^{\rm CSI}\in\mathbb{C}^{M\times L_{\mathcal{K}}}$ denote the group CSI-RS pilot matrix, where $L_{\mathcal{K}}\ge M$ and $\mathbf{S}_{\mathcal{K}}^{\rm CSI}(\mathbf{S}_{\mathcal{K}}^{\rm CSI})^H=\mathbf{I}_M$. The received signal at user $k$ is given by
\begin{equation}
	\mathbf{y}_{k,\mathcal{K}}^{\rm CSI} = \sqrt{P_{\rm CSI}}\, (\mathbf{S}_{\mathcal{K}}^{\rm CSI})^H \mathbf{V}_{\mathcal{K},M}^H\mathbf{h}_k + \mathbf{n}_{k,\mathcal{K}}^{\rm CSI} \in \mathbb{C}^{L_{\mathcal{K}} \times 1},
	\label{eq:group_csi_received}
\end{equation}
where $\mathbf{n}_{k,\mathcal{K}}^{\rm CSI}\sim\mathcal{CN}(\mathbf{0},\sigma_n^2\mathbf{I}_{L_{\mathcal{K}}})$. Each selected user then estimates and feeds back the $M$-dimensional effective channel from~\eqref{eq:group_csi_received}. With the minimum pilot length $L_{\mathcal{K}}=M$, the corresponding overhead in this stage is $T_{\rm CSI\text{-}RS}^{\rm Pro}=M+MK$. Detailed analysis regarding pre-CSI MU compatibility inference and dimension-constrained group CSI acquisition will be developed in Sections~\ref{sec:compatibility} and~\ref{sec:group_csi}.
\\ \indent
\textbf{Stage 3 (Low-Dimensional MU Beamforming):} With the effective channel feedback, the BS applies the same RZF rule as in the conventional pipeline, but over the low-dimensional effective channel in the group subspace. The reduced-domain RZF beamforming matrix is then mapped back to the antenna domain through $\mathbf{V}_{\mathcal{K},M}$, yielding the final MU beamforming matrix $\mathbf{W}$ for downlink transmission.
\\ \indent
In this way, the overhead of the proposed framework is given by 
\begin{equation}
	T_o^{\rm Pro} = \zeta_{\rm BM}(B+B|\mathcal{K}_{\rm full}|) + M + MK.
	\label{eq:overhead_pro}
\end{equation}
The proposed framework thus reduces overhead on two axes simultaneously: the pilot dimension shrinks from $N_t$ to $M\ll N_t$, and the number of reporting users shrinks from $|\mathcal{K}_{\rm full}|$ to $K\ll|\mathcal{K}_{\rm full}|$. Fig.~\ref{fig:pipeline} gives an illustration for the proposed MU downlink transmission framework.

\section{Pre-CSI MU Compatibility Inference and Group Selection}
\label{sec:compatibility}
Following the proposed framework, this section develops the pre-CSI grouping part for the \textbf{Stage 2} before CSI-RS transmission. Reusing the single-user predictor $\Psi_\theta$ developed in Section VI of~\cite{prior_singleuser} to obtain each $\widehat{\mathbf{U}}_k=\Psi_\theta(\mathbf{r}_k)$ from the RSRP fingerprint, the BS quantifies cross-user separability from the mutual overlap of the predicted subspaces and selects the group on this pre-CSI surrogate.

\subsection{Pairwise Compatibility}
For a pair of candidate users, their compatibility for MU transmission is mainly determined by their \emph{spatial separability}, i.e., whether their channels can be separated by a linear MU beamforming matrix with limited residual IUI~\cite{Yoo2006SUS,Adhikary2013JSDM}. For beamforming methods such as RZF, two users are favorable for co-scheduling when their channel directions are sufficiently separated. Otherwise, interference suppression requires large power loss or noise enhancement. Conventional CSI-based schedulers therefore evaluate pairwise separability from the instantaneous normalized channel overlap using CSI:
\begin{equation}
	\gamma_{ij}^{\rm CSI} \triangleq \frac{|\mathbf{h}_i^H\mathbf{h}_j|^2} {\|\mathbf{h}_i\|^2\|\mathbf{h}_j\|^2}.
	\label{eq:csi_pair_overlap}
\end{equation}
A small $\gamma_{ij}^{\rm CSI}$ hence indicates better spatial separability.
\\ \indent
In the proposed pre-CSI framework, $\gamma_{ij}^{\rm CSI}$ is unavailable. Under \textbf{Assumption~\ref{asm:channel}}, however, each channel $\mathbf{h}_k=\mathbf{U}_k\mathbf{g}_k$ consists of a slowly varying transmit subspace $\mathbf{U}_k$ and fast-fading coefficients $\mathbf{g}_k$. The latter are unknown before CSI-RS transmission, whereas the former can be approximated by the RSRP-predicted subspace. This motivates using the mutual overlap between the users' transmit subspaces as a fading-independent surrogate for pairwise separability. For any two user subspaces, we define the normalized subspace overlap as
\begin{equation}
	\gamma_{ij}\triangleq \frac{1}{Q}\bigl\|\mathbf{U}_i^H\mathbf{U}_j\bigr\|_F^2 = \frac{1}{Q}\mathrm{tr}(\mathbf{P}_{\mathcal{U}_i}\mathbf{P}_{\mathcal{U}_j})\in[0,1],
	\label{eq:subspace_overlap}
\end{equation}
where $\mathbf{P}_{\mathcal{U}_k}$ denotes the orthogonal projector onto $\mathcal{U}_k$ introduced in Section~\ref{sec:system}. The next two propositions show how the subspace overlap relates to the expected and worst-case channel overlap, respectively.

\begin{proposition}[\emph{Subspace overlap and expected channel overlap}]
\label{prop:overlap}\normalfont
Under \textbf{Assumption~\ref{asm:channel}}, the expected channel overlap satisfies
\begin{equation}
	\mathbb{E}\bigl[|\mathbf{h}_i^H \mathbf{h}_j|^2\bigr] = \mathrm{tr}\bigl(\mathbf{R}_i\, \mathbf{U}_i^H \mathbf{U}_j\, \mathbf{R}_j\, \mathbf{U}_j^H \mathbf{U}_i\bigr).
	\label{eq:overlap_general}
\end{equation}
If the subspace coefficients are isotropic, i.e., $\mathbf{R}_k = (\beta_k/Q)\mathbf{I}_Q$, we have
\begin{equation}
	\mathbb{E}\bigl[|\mathbf{h}_i^H \mathbf{h}_j|^2\bigr] = \beta_i\beta_jQ^{-1}\gamma_{ij}.
	\label{eq:overlap_isotropic}
\end{equation}
Correspondingly, the normalized CSI overlap satisfies
\begin{equation}
	\mathbb{E}\bigl[\gamma_{ij}^{\rm CSI}\bigr] = \frac{1}{Q^2}\bigl\|\mathbf{U}_i^H\mathbf{U}_j\bigr\|_F^2 = \frac{1}{Q}\gamma_{ij}.
	\label{eq:gamma_csi_expectation}
\end{equation}
\end{proposition}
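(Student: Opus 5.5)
The plan is to compute the expectation by conditioning. I will assume, as is implicit in \textbf{Assumption~\ref{asm:channel}} and the block-fading model, that the fast-fading coefficients $\mathbf{g}_i$ and $\mathbf{g}_j$ of distinct users are independent. The subspaces $\mathbf{U}_i,\mathbf{U}_j$ are deterministic.

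For the general identity~\eqref{eq:overlap_general}, set $\mathbf{A}\triangleq\mathbf{U}_i^H\mathbf{U}_j\in\mathbb{C}^{Q\times Q}$. Then $\mathbf{h}_i^H\mathbf{h}_j=\mathbf{g}_i^H\mathbf{A}\mathbf{g}_j$, and
\[
|\mathbf{h}_i^H\mathbf{h}_j|^2=\mathbf{g}_i^H\mathbf{A}\mathbf{g}_j\mathbf{g}_j^H\mathbf{A}^H\mathbf{g}_i .
\]
Conditioning on $\mathbf{g}_i$ and using $\mathbb{E}[\mathbf{g}_j\mathbf{g}_j^H]=\mathbf{R}_j$ gives $\mathbf{g}_i^H\mathbf{A}\mathbf{R}_j\mathbf{A}^H\mathbf{g}_i$. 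Writing this as a trace and averaging over $\mathbf{g}_i$ with $\mathbb{E}[\mathbf{g}_i\mathbf{g}_i^H]=\mathbf{R}_i$ yields $\mathrm{tr}(\mathbf{R}_i\mathbf{A}\mathbf{R}_j\mathbf{A}^H)$, which is exactly~\eqref{eq:overlap_general}. Only second moments are needed here, so Gaussianity is not even used.

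For the isotropic case~\eqref{eq:overlap_isotropic}, substitute $\mathbf{R}_k=(\beta_k/Q)\mathbf{I}_Q$. This gives
\[
\frac{\beta_i\beta_j}{Q^2}\,\mathrm{tr}(\mathbf{A}\mathbf{A}^H)=\frac{\beta_i\beta_j}{Q^2}\,\|\mathbf{U}_i^H\mathbf{U}_j\|_F^2=\beta_i\beta_jQ^{-1}\gamma_{ij},
\]
where the last step uses the definition~\eqref{eq:subspace_overlap}. The trace form of $\gamma_{ij}$ follows from $\mathrm{tr}(\mathbf{P}_{\mathcal{U}_i}\mathbf{P}_{\mathcal{U}_j})=\mathrm{tr}(\mathbf{U}_i\mathbf{A}\mathbf{U}_j^H)=\mathrm{tr}(\mathbf{A}\mathbf{A}^H)$, using $\mathbf{U}_j^H\mathbf{U}_i=\mathbf{A}^H$ and cyclicity.

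The main obstacle is~\eqref{eq:gamma_csi_expectation}, because it is an expectation of a ratio with random denominators, so~\eqref{eq:overlap_isotropic} cannot simply be divided by $\beta_i\beta_j$. I read~\eqref{eq:gamma_csi_expectation} as holding under the same isotropic assumption. The key step uses that orthonormal columns give $\|\mathbf{h}_k\|=\|\mathbf{g}_k\|$. Hence
\[
\gamma_{ij}^{\rm CSI}=|\bar{\mathbf{g}}_i^H\mathbf{A}\bar{\mathbf{g}}_j|^2,\qquad \bar{\mathbf{g}}_k\triangleq\mathbf{g}_k/\|\mathbf{g}_k\|,
\]
and all dependence on the norms cancels. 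For $\mathbf{g}_k\sim\mathcal{CN}(\mathbf{0},(\beta_k/Q)\mathbf{I}_Q)$, the direction $\bar{\mathbf{g}}_k$ is uniformly distributed on the unit sphere of $\mathbb{C}^Q$, by unitary invariance of the isotropic Gaussian. Therefore $\mathbb{E}[\bar{\mathbf{g}}_k\bar{\mathbf{g}}_k^H]=\mathbf{I}_Q/Q$: the matrix is a unitarily invariant multiple of the identity with trace $1$.

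Repeating the conditioning argument with $\bar{\mathbf{g}}_i$ and $\bar{\mathbf{g}}_j$ in place of $\mathbf{g}_i$ and $\mathbf{g}_j$ (they remain independent) then gives
\[
\mathbb{E}[\gamma_{ij}^{\rm CSI}]=\frac{1}{Q^2}\,\mathrm{tr}(\mathbf{A}\mathbf{A}^H)=\frac{1}{Q^2}\,\|\mathbf{U}_i^H\mathbf{U}_j\|_F^2=\frac{1}{Q}\gamma_{ij}.
\]
I will state explicitly that this step needs the isotropy of $\mathbf{R}_k$, not merely the general covariance form.
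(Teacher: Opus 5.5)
Your proposal is correct and follows essentially the same route as the paper: you condition on $\mathbf{g}_i$ to replace $\mathbf{g}_j\mathbf{g}_j^H$ by $\mathbf{R}_j$, average the resulting quadratic form, specialize to isotropic $\mathbf{R}_k$, and then rerun the argument on the normalized directions using $\|\mathbf{h}_k\|=\|\mathbf{g}_k\|$ and $\mathbb{E}[\bar{\mathbf{g}}_k\bar{\mathbf{g}}_k^H]=Q^{-1}\mathbf{I}_Q$. Your explicit remarks on inter-user independence and on the isotropy and unitary invariance needed for the normalized identity state precisely what the paper's proof uses implicitly.
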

\begin{IEEEproof}
	Substituting $\mathbf{h}_k=\mathbf{U}_k\mathbf{g}_k$ into $\mathbb{E}[|\mathbf{h}_i^H\mathbf{h}_j|^2]$ reduces the problem to averaging over the independent subspace coefficients $\mathbf{g}_i$ and $\mathbf{g}_j$. Conditioning on $\mathbf{g}_i$, the expectation over $\mathbf{g}_j$ replaces $\mathbf{g}_j\mathbf{g}_j^H$ by $\mathbf{R}_j$. The remaining term is a quadratic form in $\mathbf{g}_i$; taking its expectation and using $\mathbb{E}[\mathbf{g}_i\mathbf{g}_i^H]=\mathbf{R}_i$ gives the trace expression in~\eqref{eq:overlap_general}. If $\mathbf{R}_k=(\beta_k/Q)\mathbf{I}_Q$, then~\eqref{eq:overlap_general} becomes $\frac{\beta_i\beta_j}{Q^2}\|\mathbf{U}_i^H\mathbf{U}_j\|_F^2=\beta_i\beta_jQ^{-1}\gamma_{ij}$, which proves~\eqref{eq:overlap_isotropic}. For the normalized CSI overlap, $\|\mathbf{h}_k\|=\|\mathbf{g}_k\|$ because $\mathbf{U}_k$ has orthonormal columns. Under the isotropic model, the normalized direction satisfies $\mathbb{E}[\mathbf{g}_k\mathbf{g}_k^H/\|\mathbf{g}_k\|^2]=Q^{-1}\mathbf{I}_Q$. Applying the same argument to these normalized directions yields~\eqref{eq:gamma_csi_expectation}.
\end{IEEEproof}

\textbf{Proposition~\ref{prop:overlap}} shows that the expected channel overlap is governed by the relative geometry of the two transmit subspaces, with its exact value further weighted by the coefficient covariances $\mathbf{R}_i$ and $\mathbf{R}_j$. The isotropic model removes these mode-dependent weights by spreading the power uniformly over the $Q$ modes of each subspace, and thus yields the clean relation $\mathbb{E}[\gamma_{ij}^{\rm CSI}]=Q^{-1}\gamma_{ij}$ in~\eqref{eq:gamma_csi_expectation}. This assumption is idealized, since practical channels generally have unequal mode powers, and exact numerical agreement between $\gamma_{ij}$ and $\gamma_{ij}^{\rm CSI}$ should not be expected in all channel realizations. Nevertheless, the general expression~\eqref{eq:overlap_general} still shows that the common subspace geometry is the structural source of channel coupling, while the unknown mode powers only reweight the shared directions. Since these powers and the instantaneous fading coefficients are unavailable before CSI acquisition, the normalized subspace overlap $\gamma_{ij}$ still provides a power-agnostic geometric reference for assessing potential inter-user coupling. Its practical role is therefore to preserve the compatibility ordering rather than to predict the exact CSI-domain overlap value, which is verified empirically in the first part of Section~\ref{sec:results_stage1}.

\begin{proposition}[\emph{Subspace overlap and worst-case channel overlap}]
\label{prop:worst_case}\normalfont
Under \textbf{Assumption~\ref{asm:channel}}, the worst-case channel overlap over all channel realizations is given by
\begin{equation}
	\sup_{\mathbf{g}_i,\mathbf{g}_j \neq \mathbf{0}} \frac{|\mathbf{h}_i^H \mathbf{h}_j|^2} {\|\mathbf{h}_i\|^2 \|\mathbf{h}_j\|^2}	= \bigl\|\mathbf{U}_i^H \mathbf{U}_j\bigr\|_2^2	= \bigl\|\mathbf{P}_{\mathcal{U}_i}\mathbf{P}_{\mathcal{U}_j}\bigr\|_2^2 = \rho_{ij},
	\label{eq:worst_case_overlap}
\end{equation}
where the supremum is attained when $\mathbf{g}_i$ and $\mathbf{g}_j$ are aligned to the dominant singular vectors of $\mathbf{U}_i^H \mathbf{U}_j$.
\end{proposition}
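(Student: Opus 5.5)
Substituting $\mathbf{h}_k=\mathbf{U}_k\mathbf{g}_k$ from \textbf{Assumption~\ref{asm:channel}} turns the Rayleigh-type quotient into a bilinear problem in the coefficient vectors. Because $\mathbf{U}_k$ has orthonormal columns, $\|\mathbf{h}_k\|=\|\mathbf{g}_k\|$ and $\mathbf{h}_i^H\mathbf{h}_j=\mathbf{g}_i^H\mathbf{C}\mathbf{g}_j$ with $\mathbf{C}\triangleq\mathbf{U}_i^H\mathbf{U}_j\in\mathbb{C}^{Q\times Q}$. The quotient therefore equals $|\mathbf{x}^H\mathbf{C}\mathbf{y}|^2$ with unit vectors $\mathbf{x}=\mathbf{g}_i/\|\mathbf{g}_i\|$ and $\mathbf{y}=\mathbf{g}_j/\|\mathbf{g}_j\|$. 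The supremum over nonzero $\mathbf{g}_i,\mathbf{g}_j$ becomes a maximum over the compact product of unit spheres.

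I will then apply the variational characterization of the spectral norm, $\max_{\|\mathbf{x}\|=\|\mathbf{y}\|=1}|\mathbf{x}^H\mathbf{C}\mathbf{y}|=\sigma_{\max}(\mathbf{C})=\|\mathbf{C}\|_2$. For the upper bound, Cauchy--Schwarz gives $|\mathbf{x}^H\mathbf{C}\mathbf{y}|\le\|\mathbf{x}\|\|\mathbf{C}\mathbf{y}\|\le\|\mathbf{C}\|_2$. For attainment, take the SVD $\mathbf{C}=\sum_q\sigma_q\mathbf{p}_q\mathbf{v}_q^H$ and set $\mathbf{x}=\mathbf{p}_1$, $\mathbf{y}=\mathbf{v}_1$, which yields $\mathbf{x}^H\mathbf{C}\mathbf{y}=\sigma_1$. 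This gives the first equality and the attainment claim, namely $\mathbf{g}_i\propto\mathbf{p}_1$ and $\mathbf{g}_j\propto\mathbf{v}_1$, the dominant singular vectors of $\mathbf{U}_i^H\mathbf{U}_j$. Any nonzero scaling works because the quotient is scale invariant in each argument.

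For the second equality, $\mathbf{P}_{\mathcal{U}_i}\mathbf{P}_{\mathcal{U}_j}=\mathbf{U}_i(\mathbf{U}_i^H\mathbf{U}_j)\mathbf{U}_j^H$. Left multiplication by $\mathbf{U}_i$ and right multiplication by $\mathbf{U}_j^H$ are isometries because $\mathbf{U}_i^H\mathbf{U}_i=\mathbf{U}_j^H\mathbf{U}_j=\mathbf{I}_Q$. Hence the nonzero singular values are unchanged, which can be checked directly via $\mathbf{A}^H\mathbf{A}$ with $\mathbf{A}=\mathbf{U}_i\mathbf{C}\mathbf{U}_j^H$, giving $\mathbf{A}^H\mathbf{A}=\mathbf{U}_j\mathbf{C}^H\mathbf{C}\mathbf{U}_j^H$. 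The spectral norms therefore agree. Finally, $\rho_{ij}$ is defined as this common value, which is the squared cosine of the smallest principal angle between $\mathcal{U}_i$ and $\mathcal{U}_j$.

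There is no real obstacle here. The only point requiring care is the reduction from a supremum over $\mathbf{g}_i,\mathbf{g}_j$ to one over all unit vectors in $\mathbb{C}^Q$. It relies on every $\mathbf{g}_k\in\mathbb{C}^Q$ being an admissible realization, which holds because the supremum is over all nonzero vectors regardless of the distribution. This is also why the result, unlike \textbf{Proposition~\ref{prop:overlap}}, needs no isotropy on $\mathbf{R}_k$.
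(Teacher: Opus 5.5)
Your proposal is correct and follows the paper's proof essentially step for step: you rewrite the quotient as $|\mathbf{g}_i^H\mathbf{U}_i^H\mathbf{U}_j\mathbf{g}_j|^2/(\|\mathbf{g}_i\|^2\|\mathbf{g}_j\|^2)$ using $\|\mathbf{h}_k\|=\|\mathbf{g}_k\|$, bound it by Cauchy--Schwarz with equality at the dominant singular-vector pair, and get the projector form because $\mathbf{P}_{\mathcal{U}_i}\mathbf{P}_{\mathcal{U}_j}=\mathbf{U}_i(\mathbf{U}_i^H\mathbf{U}_j)\mathbf{U}_j^H$ has the same nonzero singular values as $\mathbf{U}_i^H\mathbf{U}_j$. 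Your explicit $\mathbf{A}^H\mathbf{A}$ check, and your note that the supremum ranges over all nonzero coefficient vectors (so no isotropy on $\mathbf{R}_k$ is needed), are useful clarifications but do not change the argument.
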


\begin{IEEEproof}
By applying the Cauchy–Schwarz inequality, we obtain
\begin{equation}
	|\mathbf{h}_i^H \mathbf{h}_j| = |\mathbf{g}_i^H \mathbf{U}_i^H\mathbf{U}_j \mathbf{g}_j| \leq \|\mathbf{U}_i^H\mathbf{U}_j\|_2\,\|\mathbf{g}_i\|\|\mathbf{g}_j\|,
	\label{26}
\end{equation}
where equality holds when $\mathbf{g}_i$ and $\mathbf{g}_j$ align with the dominant left and right singular vectors of $\mathbf{U}_i^H\mathbf{U}_j$, respectively. Squaring both sides of~\eqref{26}. and exploiting $\|\mathbf{h}_k\|=\|\mathbf{g}_k\|$ gives the first equality in~\eqref{eq:worst_case_overlap}. The projector form follows from $\mathbf{P}_i\mathbf{P}_j=\mathbf{U}_i(\mathbf{U}_i^H\mathbf{U}_j)\mathbf{U}_j^H$, whose nonzero singular values are those of $\mathbf{U}_i^H\mathbf{U}_j$.
\end{IEEEproof}

\textbf{Proposition~\ref{prop:worst_case}} complements \textbf{Proposition~\ref{prop:overlap}} by characterizing the strongest possible channel alignment within two user subspaces. If $\{\sigma_{ij,q}\}_{q=1}^{Q}$ are the singular values of $\mathbf{U}_i^H\mathbf{U}_j$, the average and strongest subspace overlaps are represented by
\begin{equation}
	\rho_{ij}=\max_q \sigma_{ij,q}^2,\qquad \gamma_{ij}=\frac{1}{Q}\sum_{q=1}^{Q}\sigma_{ij,q}^2.
	\label{eq:rho_gamma_relation}
\end{equation}
Thus, $\gamma_{ij}$ measures the average overlap across the shared principal modes, whereas $\rho_{ij}$ measures the largest shared mode and captures the worst-case IUI risk. Since $\gamma_{ij}\leq\rho_{ij}\leq Q\gamma_{ij}$, reducing $\gamma_{ij}$ also limits the possible growth of the strongest overlap when $Q$ is fixed, although it is less conservative than directly minimizing $\rho_{ij}$. We therefore adopt $\gamma_{ij}$ as the main compatibility metric in this paper.
\\ \indent 
However, these quantities still depend on the ground-truth subspaces $\mathbf{U}_k$, which are also unavailable before CSI acquisition. The BS therefore substitutes the RSRP-predicted orthonormal basis $\widehat{\mathbf{U}}_k$ and evaluates the corresponding \emph{predicted compatibility proxies}:
\begin{equation}
	\widehat{\gamma}_{ij} = \frac{1}{Q}\bigl\|\widehat{\mathbf{U}}_i^H \widehat{\mathbf{U}}_j\bigr\|_F^2, \qquad \widehat{\rho}_{ij} = \bigl\|\widehat{\mathbf{U}}_i^H \widehat{\mathbf{U}}_j\bigr\|_2^2.
	\label{eq:gamma_hat}
\end{equation}
The following proposition bounds their deviations from the true subspace-overlap quantities.

\begin{proposition}[\emph{Compatibility proxy accuracy}]
\label{prop:proxy_accuracy}\normalfont
Define the per-user \emph{subspace alignment quality}
\begin{equation}
	\xi_k \triangleq \frac{1}{Q}\bigl\|\widehat{\mathbf{U}}_k^H \mathbf{U}_k\bigr\|_F^2 \in [0,1].
	\label{eq:xi_alignment}
\end{equation}
Then the \emph{predicted compatibility proxies} satisfy
\begin{subequations}
	\begin{align}
		\bigl|\widehat{\gamma}_{ij} - \gamma_{ij}\bigr| &\leq \sqrt{2} \bigl(\sqrt{1-\xi_i} + \sqrt{1-\xi_j}\bigr), \label{eq:proxy_error}\\
		\bigl|\sqrt{\widehat{\rho}_{ij}}-\sqrt{\rho_{ij}}\bigr| &\leq \sqrt{Q(1-\xi_i)}+\sqrt{Q(1-\xi_j)}.
		\label{eq:rho_sqrt_error}
	\end{align}
\end{subequations}
\end{proposition}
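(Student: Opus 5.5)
The plan is to express everything through orthogonal projectors and compare predicted and true subspaces via the projector distance. Write $\mathbf{P}_k=\mathbf{U}_k\mathbf{U}_k^H$ and $\widehat{\mathbf{P}}_k=\widehat{\mathbf{U}}_k\widehat{\mathbf{U}}_k^H$, both rank-$Q$ orthogonal projectors. The first step is to relate the projector difference to the alignment quality:
\begin{equation*}
\|\widehat{\mathbf{P}}_k-\mathbf{P}_k\|_F^2=\mathrm{tr}(\widehat{\mathbf{P}}_k)+\mathrm{tr}(\mathbf{P}_k)-2\,\mathrm{tr}(\widehat{\mathbf{P}}_k\mathbf{P}_k)=2Q-2\|\widehat{\mathbf{U}}_k^H\mathbf{U}_k\|_F^2=2Q(1-\xi_k),
\end{equation*}
so that $\|\widehat{\mathbf{P}}_k-\mathbf{P}_k\|_F=\sqrt{2Q(1-\xi_k)}$.

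For \eqref{eq:proxy_error}, use $\gamma_{ij}=Q^{-1}\mathrm{tr}(\mathbf{P}_i\mathbf{P}_j)$ from \eqref{eq:subspace_overlap}, together with the analogous hatted expression. Telescope the difference:
\begin{equation*}
\mathrm{tr}(\widehat{\mathbf{P}}_i\widehat{\mathbf{P}}_j)-\mathrm{tr}(\mathbf{P}_i\mathbf{P}_j)=\mathrm{tr}\bigl((\widehat{\mathbf{P}}_i-\mathbf{P}_i)\widehat{\mathbf{P}}_j\bigr)+\mathrm{tr}\bigl(\mathbf{P}_i(\widehat{\mathbf{P}}_j-\mathbf{P}_j)\bigr).
\end{equation*}
Bound each term by trace Cauchy–Schwarz, $|\mathrm{tr}(\mathbf{A}\mathbf{B})|\le\|\mathbf{A}\|_F\|\mathbf{B}\|_F$, with $\|\widehat{\mathbf{P}}_j\|_F=\|\mathbf{P}_i\|_F=\sqrt{Q}$. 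This gives $|\widehat{\gamma}_{ij}-\gamma_{ij}|\le Q^{-1}\sqrt{Q}\,\sqrt{2Q}\bigl(\sqrt{1-\xi_i}+\sqrt{1-\xi_j}\bigr)$, which is exactly $\sqrt{2}(\sqrt{1-\xi_i}+\sqrt{1-\xi_j})$. The factor $\sqrt{2}$ matches the stated bound, which suggests this is the intended route. The bound is loose, and even exceeds $1$ when $\xi$ is small, but it is valid.

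For \eqref{eq:rho_sqrt_error}, use the projector form from Proposition~\ref{prop:worst_case}: $\sqrt{\rho_{ij}}=\|\mathbf{P}_i\mathbf{P}_j\|_2$, and likewise $\sqrt{\widehat{\rho}_{ij}}=\|\widehat{\mathbf{P}}_i\widehat{\mathbf{P}}_j\|_2$. Apply the reverse triangle inequality for the spectral norm to the same telescoping decomposition $\widehat{\mathbf{P}}_i\widehat{\mathbf{P}}_j-\mathbf{P}_i\mathbf{P}_j=(\widehat{\mathbf{P}}_i-\mathbf{P}_i)\widehat{\mathbf{P}}_j+\mathbf{P}_i(\widehat{\mathbf{P}}_j-\mathbf{P}_j)$, using submultiplicativity with $\|\widehat{\mathbf{P}}_j\|_2,\|\mathbf{P}_i\|_2\le1$. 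This yields
\begin{equation*}
\bigl|\sqrt{\widehat{\rho}_{ij}}-\sqrt{\rho_{ij}}\bigr|\le\|\widehat{\mathbf{P}}_i-\mathbf{P}_i\|_2+\|\widehat{\mathbf{P}}_j-\mathbf{P}_j\|_2.
\end{equation*}
Finally, bound each spectral norm by the corresponding Frobenius norm, $\|\widehat{\mathbf{P}}_k-\mathbf{P}_k\|_2\le\|\widehat{\mathbf{P}}_k-\mathbf{P}_k\|_F=\sqrt{2Q(1-\xi_k)}$.

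The main obstacle is the constant in this last step. The crude Frobenius bound gives $\sqrt{2Q(1-\xi_k)}$, not the stated $\sqrt{Q(1-\xi_k)}$. To remove the $\sqrt{2}$, I would use the structure of the difference of two equal-rank projectors. Its nonzero eigenvalues come in pairs $\pm\sin\theta_q$, where $\theta_q$ are the principal angles. Hence
\begin{equation*}
\|\widehat{\mathbf{P}}_k-\mathbf{P}_k\|_2=\max_q\sin\theta_q\le\Bigl(\sum_q\sin^2\theta_q\Bigr)^{1/2}=\sqrt{Q-\|\widehat{\mathbf{U}}_k^H\mathbf{U}_k\|_F^2}=\sqrt{Q(1-\xi_k)}.
\end{equation*}
Equivalently, $\|\widehat{\mathbf{P}}_k-\mathbf{P}_k\|_F^2=2\sum_q\sin^2\theta_q$, so the spectral norm is at most the Frobenius norm divided by $\sqrt{2}$. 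I would state this sin-theta fact, citing the CS decomposition, and substitute it to obtain \eqref{eq:rho_sqrt_error}.
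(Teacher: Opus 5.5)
Your proposal is correct and follows essentially the same route as the paper. For \eqref{eq:proxy_error} you use the same telescoping through $\mathbf{P}_{\mathcal{U}_i}\widehat{\mathbf{P}}_j$, trace Cauchy--Schwarz, and the identity $\|\widehat{\mathbf{P}}_k-\mathbf{P}_{\mathcal{U}_k}\|_F^2=2Q(1-\xi_k)$; for \eqref{eq:rho_sqrt_error} you use the reverse triangle inequality and the principal-angle bound $\max_q\sin^2\theta_{k,q}\le\sum_q\sin^2\theta_{k,q}=Q(1-\xi_k)$. Your explicit submultiplicativity argument for bounding $\|\widehat{\mathbf{P}}_i\widehat{\mathbf{P}}_j-\mathbf{P}_{\mathcal{U}_i}\mathbf{P}_{\mathcal{U}_j}\|_2$ only fills in a step the paper leaves implicit.
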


\begin{IEEEproof}
Let $\widehat{\mathbf{P}}_k=\widehat{\mathbf{U}}_k\widehat{\mathbf{U}}_k^H$. From~\eqref{eq:subspace_overlap} and~\eqref{eq:gamma_hat}, $\widehat{\gamma}_{ij}-\gamma_{ij}$ is the normalized trace difference between $\widehat{\mathbf{P}}_i\widehat{\mathbf{P}}_j$ and $\mathbf{P}_{\mathcal{U}_i}\mathbf{P}_{\mathcal{U}_j}$. To separate the two users' subspace-prediction errors, we add and subtract $\mathbf{P}_{\mathcal{U}_i}\widehat{\mathbf{P}}_j$ inside the trace. The triangle inequality and $|\mathrm{tr}(\mathbf{A}\mathbf{B})| \leq \|\mathbf{A}\|_F\|\mathbf{B}\|_F$ then give
\begin{equation}
	\bigl|\widehat{\gamma}_{ij}-\gamma_{ij}\bigr| \leq \frac{1}{Q}\bigl( \|\widehat{\mathbf{P}}_i-\mathbf{P}_{\mathcal{U}_i}\|_F\|\widehat{\mathbf{P}}_j\|_F+\|\mathbf{P}_{\mathcal{U}_i}\|_F\|\widehat{\mathbf{P}}_j-\mathbf{P}_{\mathcal{U}_j}\|_F \bigr).
	\label{eq:proxy_mid}
\end{equation}
Moreover, since $\|\mathbf{P}_{\mathcal{U}_k}\|_F = \|\widehat{\mathbf{P}}_k\|_F = \sqrt{Q}$, we have
\begin{align}
	\|\widehat{\mathbf{P}}_k - \mathbf{P}_{\mathcal{U}_k}\|_F^2 &= \mathrm{tr}\bigl((\widehat{\mathbf{P}}_k-\mathbf{P}_{\mathcal{U}_k})^2\bigr) \nonumber= 2Q-2\mathrm{tr}(\widehat{\mathbf{P}}_k\mathbf{P}_{\mathcal{U}_k}) \\ &=2Q-2\|\widehat{\mathbf{U}}_k^H\mathbf{U}_k\|_F^2 \nonumber =2Q(1-\xi_k).
\end{align}
Together with~\eqref{eq:proxy_mid}, this gives
\begin{align}
	\bigl|\widehat{\gamma}_{ij} - \gamma_{ij}\bigr| &\leq \frac{1}{Q}\bigl(\sqrt{2Q(1-\xi_i)}\sqrt{Q} +\sqrt{Q}\sqrt{2Q(1-\xi_j)}\bigr) \nonumber\\ &= \sqrt{2}\bigl(\sqrt{1-\xi_i}+\sqrt{1-\xi_j}\bigr),
\end{align}
which proves~\eqref{eq:proxy_error}. For $\rho_{ij}$, since $\sqrt{\rho_{ij}}=\|\mathbf{P}_{\mathcal{U}_i}\mathbf{P}_{\mathcal{U}_j}\|_2$ and $\sqrt{\widehat{\rho}_{ij}}=\|\widehat{\mathbf{P}}_i\widehat{\mathbf{P}}_j\|_2$, the reverse triangle inequality leads to
\begin{align}
	\bigl|\sqrt{\widehat{\rho}_{ij}}-\sqrt{\rho_{ij}}\bigr| &\leq 	\|\widehat{\mathbf{P}}_i\widehat{\mathbf{P}}_j-\mathbf{P}_{\mathcal{U}_i}\mathbf{P}_{\mathcal{U}_j}\|_2 \nonumber\\ &\leq \|\widehat{\mathbf{P}}_i-\mathbf{P}_{\mathcal{U}_i}\|_2 +\|\widehat{\mathbf{P}}_j-\mathbf{P}_{\mathcal{U}_j}\|_2.
\end{align}
For the principal angles $\{\theta_{k,q}\}_{q=1}^{Q}$ between $\widehat{\mathcal{U}}_k$ and $\mathcal{U}_k$, we have
\begin{equation}
	\|\widehat{\mathbf{P}}_k-\mathbf{P}_{\mathcal{U}_k}\|_2^2 = \max_q \sin^2\theta_{k,q} \leq \sum_{q=1}^{Q}\sin^2\theta_{k,q} = Q(1-\xi_k),
\end{equation}
which proves~\eqref{eq:rho_sqrt_error}.
\end{IEEEproof}

\textbf{Proposition~\ref{prop:proxy_accuracy}} quantifies how subspace-prediction error propagates to the pairwise compatibility proxies. The bounds decrease monotonically with $\xi_k$ and vanish as $\xi_k \to 1$, indicating that prediction errors cannot grow arbitrarily into compatibility errors. Since these are worst-case bounds and may be conservative, their practical tightness is evaluated empirically through the pair-level results in Section~\ref{sec:results_stage1}. Note that $\xi_k$ is only an analysis and validation metric for evaluating prediction accuracy and is not required by the online scheduler.
\vspace{-1cm}
\subsection{Greedy Group Selection}
The preceding subsection provides pairwise pre-CSI compatibility proxies for all candidate users in $\mathcal{K}_{\rm full}$. To form a MU group, these pairwise quantities must be converted into a group-level selection criterion. For linear MU beamforming, a single strongly coupled user pair can dominate residual IUI and degrade the conditioning of the effective MU channel, even when the other pairs in the group are well separated~\cite{Yoo2006SUS}. We therefore characterize a candidate group by its largest pairwise overlap. For any group $\mathcal{G}\subseteq\mathcal{K}_{\rm full}$, define the true and predicted group overlaps as
\begin{equation}
	\Gamma(\mathcal{G}) \triangleq \max_{i<j,\;i,j\in\mathcal{G}}\gamma_{ij},\qquad \widehat{\Gamma}(\mathcal{G}) \triangleq \max_{i<j,\;i,j\in\mathcal{G}}\widehat{\gamma}_{ij}.
	\label{eq:group_overlap_def}
\end{equation}
Since scheduling uses $\widehat{\Gamma}(\mathcal{G})$ while the desired compatibility criterion is $\Gamma(\mathcal{G})$, the next step is to verify that the pairwise prediction accuracy remains controlled after the max aggregation. The following proposition gives this group-level error bound.

\begin{proposition}[\emph{Group overlap accuracy}]
\label{prop:group_coherence}\normalfont
If $\xi_{\min}\triangleq\min_{k\in\mathcal{G}}\xi_k$ and $|\mathcal{G}|\geq2$, then we have
\begin{equation}
	\bigl|\Gamma(\mathcal{G})-\widehat{\Gamma}(\mathcal{G})\bigr| \leq 2\sqrt{2(1-\xi_{\min})}.
	\label{eq:group_coherence_bound}
\end{equation}
\end{proposition}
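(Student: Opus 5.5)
The plan is to reduce the group-level statement to the pairwise bound of \textbf{Proposition~\ref{prop:proxy_accuracy}}, using the elementary fact that the maximum over a common finite index set is $1$-Lipschitz in the sup norm. Concretely, for any two families $\{a_{ij}\}$ and $\{b_{ij}\}$ indexed by the same pairs $i<j$ in $\mathcal{G}$,
\begin{equation*}
\bigl|\max_{i<j} a_{ij}-\max_{i<j} b_{ij}\bigr|\le \max_{i<j}|a_{ij}-b_{ij}|.
\end{equation*}
The assumption $|\mathcal{G}|\ge 2$ ensures this index set is nonempty, so both $\Gamma(\mathcal{G})$ and $\widehat{\Gamma}(\mathcal{G})$ are well defined.

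To prove the max inequality, I would let $(i^\star,j^\star)$ attain $\Gamma(\mathcal{G})$. Then
\begin{equation*}
\Gamma(\mathcal{G})-\widehat{\Gamma}(\mathcal{G})\le \gamma_{i^\star j^\star}-\widehat{\gamma}_{i^\star j^\star}\le \max_{i<j}|\gamma_{ij}-\widehat{\gamma}_{ij}|.
\end{equation*}
The symmetric argument, taking a maximizer of $\widehat{\Gamma}(\mathcal{G})$, bounds $\widehat{\Gamma}(\mathcal{G})-\Gamma(\mathcal{G})$ by the same quantity.

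Next I would bound each pairwise term using \eqref{eq:proxy_error}:
\begin{equation*}
|\widehat{\gamma}_{ij}-\gamma_{ij}|\le \sqrt{2}\bigl(\sqrt{1-\xi_i}+\sqrt{1-\xi_j}\bigr).
\end{equation*}
Since $\xi_i,\xi_j\ge\xi_{\min}$ for every $i,j\in\mathcal{G}$, and $t\mapsto\sqrt{1-t}$ is decreasing on $[0,1]$, each summand is at most $\sqrt{1-\xi_{\min}}$. The bound therefore becomes $2\sqrt{2}\sqrt{1-\xi_{\min}}=2\sqrt{2(1-\xi_{\min})}$, uniformly over all pairs. Combining this with the max inequality yields \eqref{eq:group_coherence_bound}.

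No step is genuinely hard. The only point needing care is the max-aggregation step: the maximizers of $\Gamma(\mathcal{G})$ and $\widehat{\Gamma}(\mathcal{G})$ may be different pairs, so the two one-sided inequalities must be handled separately rather than comparing a single pair. The rest is direct substitution of \textbf{Proposition~\ref{prop:proxy_accuracy}} together with monotonicity.
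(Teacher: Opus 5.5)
Your proposal is correct and takes essentially the same route as the paper: apply Proposition~\ref{prop:proxy_accuracy} to each pair, use $\xi_i,\xi_j\ge\xi_{\min}$ to obtain the uniform bound $2\sqrt{2(1-\xi_{\min})}$, and pass the two one-sided inequalities through the maximum. The paper takes the maximum directly in $\gamma_{ij}\le\widehat{\gamma}_{ij}+2\sqrt{2(1-\xi_{\min})}$ and its mirror image, which is the same as your maximizer-based argument that the maximum is $1$-Lipschitz.
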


\begin{IEEEproof}
By \emph{Proposition~\ref{prop:proxy_accuracy}}, every pair $(i,j)$ in $\mathcal{G}$ satisfies
\begin{equation}
	\bigl|\widehat{\gamma}_{ij}-\gamma_{ij}\bigr| \leq \sqrt{2}\bigl(\sqrt{1-\xi_i}+\sqrt{1-\xi_j}\bigr) \leq 2\sqrt{2(1-\xi_{\min})}.
\end{equation}
Thus, $\gamma_{ij}\leq \widehat{\gamma}_{ij} +2\sqrt{2(1-\xi_{\min})}$ and $\widehat{\gamma}_{ij}\leq \gamma_{ij}+2\sqrt{2(1-\xi_{\min})}$ hold for all pairs in $\mathcal{G}$. Taking the maximum over all pairs in both inequalities gives~\eqref{eq:group_coherence_bound}.
\end{IEEEproof}

\textbf{Proposition~\ref{prop:group_coherence}} extends this bounded-error relation from pairwise overlap to the group-level bottleneck metric. When the per-user subspace predictions are reliable, the predicted group bottleneck remains close to the true subspace bottleneck, while the practical selection behavior is evaluated in the second part of Section~\ref{sec:results_stage1}. The scheduler therefore seeks a size-$K$ group that minimizes $\widehat{\Gamma}(\mathcal{G})$:
\begin{equation}
	\mathcal{K} = \operatorname*{arg\,min}_{\mathcal{G}\subseteq\mathcal{K}_{\rm full},\,|\mathcal{G}|=K} \widehat{\Gamma}(\mathcal{G}).
	\label{eq:pre_csi_grouping}
\end{equation}
Solving~\eqref{eq:pre_csi_grouping} exactly requires checking $\binom{|\mathcal{K}_{\rm full}|}{K}$ candidate groups, which is combinatorial and becomes prohibitive for a large candidate pool. We therefore adopt a low-complexity greedy rule inspired by SUS~\cite{Yoo2006SUS}, but move the selection metric from the instantaneous CSI domain to the predicted subspace domain. We refer to this rule as \emph{pre-CSI subspace-SUS}. Since the overlap metric measures separability but not link strength, the online scheduler first selects the user with the largest estimated large-scale channel power:
\begin{equation}
	k_1 = \operatorname*{arg\,max}_{k\in\mathcal{K}_{\rm full}}\widehat{\beta}_k,\quad \mathcal{K}=\{k_1\},
	\label{eq:greedy_seed}
\end{equation}
where $\widehat{\beta}_k$ is estimated from the SSB RSRP fingerprint $\mathbf{r}_k$\footnote{The channel power $\beta_k=\mathrm{tr}(\mathbf{R}_k)$ is estimated from the SSB RSRP fingerprint $\mathbf{r}_k$. As a standardized power measurement, RSRP reflects the large-scale received power and is routinely used for large-scale gain and pathloss estimation~\cite{3GPP38215,Giordani2019BeamManagementNR}.}. The BS then adds the user whose largest predicted overlap with the current group is the smallest, i.e., it iteratively performs:
\begin{equation}
	k^\star = \operatorname*{arg\,min}_{k \in \mathcal{K}_{\rm full}\setminus\mathcal{K}} \max_{j \in \mathcal{K}}\widehat{\gamma}_{kj},
	\label{eq:greedy}
\end{equation}
until $|\mathcal{K}| = K$. Equivalently, minimizing the maximum predicted overlap is the same as maximizing the minimum separation among the selected predicted subspaces, since $\widehat{\gamma}_{ij}$ decreases monotonically with the normalized chordal distance between users $i$ and $j$. The update in~\eqref{eq:greedy} therefore implements a farthest-first traversal in the predicted subspace domain, consistent with classical greedy selection methods~\cite{Gonzalez1985FarthestFirst}. Combined with the overlap-based compatibility surrogate, it turns the pre-CSI grouping problem into a sequential selection procedure: the BS only updates the worst predicted overlap between each remaining candidate and the current group. With precomputed pairwise overlaps, the proposed pre-CSI subspace-SUS reduces the search complexity from combinatorial in $|\mathcal{K}_{\rm full}|$ to $\mathcal{O}(K|\mathcal{K}_{\rm full}|)$.
\vspace{-0.3cm}
\section{Low-Dimensional Group CSI Acquisition and Beamforming}
\label{sec:group_csi}
This section develops the group CSI acquisition part of \textbf{Stage 2}. Given the selected group $\mathcal{K}$ and its predicted subspaces $\{\widehat{\mathbf{U}}_k\}_{k\in\mathcal{K}}$, the BS constructs a compact group CSI-RS subspace $\mathcal{V}_{\mathcal{K},M}$ with basis $\mathbf{V}_{\mathcal{K},M}\in\mathbb{C}^{N_t\times M}$, over which the selected users feed back low-dimensional effective channels for subsequent RZF beamforming. We then analyze how this subspace preserves channel energy and effective-channel separability.
\vspace{-0.3cm}
\subsection{Group-Specific CSI-RS Subspace Design}
For the selected group $\mathcal{K}$, reducing the overall union-subspace dimension requires the BS to identify a compact set of CSI-RS probing directions that preserves the dominant group-energy components of the scheduled users, rather than probing the full concatenation of all predicted subspaces. A natural compression criterion is to maximize the total captured channel energy:
\begin{equation}
	\max_{\mathbf{V} \in \mathbb{C}^{N_t \times M}}\  \sum_{k \in \mathcal{K}} \mathbb{E}\!\left[\|\mathbf{V}^H\mathbf{h}_k\|^2\right] \quad \mathrm{s.t.}\  \mathbf{V}^H\mathbf{V}=\mathbf{I}_M.
	\label{eq:oracle_group_capture_opt}
\end{equation}
Problem~\eqref{eq:oracle_group_capture_opt} is intractable before CSI acquisition because it depends on unavailable channel statistics. The following proposition gives a tractable characterization of its solution under the isotropic coefficient model.

\begin{proposition}[\emph{Capture optimality of group CSI-RS subspace}]
\label{prop:svd_opt}\normalfont
Under \textbf{Assumption~\ref{asm:channel}} with isotropic $\mathbf{R}_k = (\beta_k/Q)\mathbf{I}_Q$, the expected captured energy of user $k$ over any $M$-dimensional subspace with orthonormal basis $\mathbf{V}$ satisfies
\begin{equation}
	\mathbb{E}\!\left[\|\mathbf{V}^H\mathbf{h}_k\|^2\right] = \frac{\beta_k}{Q}\bigl\|\mathbf{V}^H\mathbf{U}_k\bigr\|_F^2.
	\label{eq:expected_capture_formula}
\end{equation}
Let $\mathbf{R}_{\mathcal{K}}^\star = \sum_{k \in \mathcal{K}}\omega_k^\star\mathbf{U}_k\mathbf{U}_k^H$ denote the group spatial covariance, i.e., the power-weighted sum of the scheduled users' transmit-subspace projectors with $\omega_k^\star = \beta_k/\sum_{\ell\in\mathcal{K}}\beta_\ell$. Then the optimizer of~\eqref{eq:oracle_group_capture_opt} is given by
\begin{equation}
	\mathbf{V}_{\mathcal{K},M}^{\star} = \mathrm{TopEig}_M(\mathbf{R}_{\mathcal{K}}^\star),
	\label{eq:oracle_group_subspace}
\end{equation}
where $\mathrm{TopEig}_M(\cdot)$ denotes an orthonormal basis of the dominant $M$-dimensional eigenspace of a Hermitian matrix.
\end{proposition}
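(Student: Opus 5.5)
The plan has two parts: first compute the per-user expected captured energy in closed form, then turn the summed objective of~\eqref{eq:oracle_group_capture_opt} into a trace maximization solved by the Ky Fan principle.

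\emph{Closed form for one user.} Substitute $\mathbf{h}_k=\mathbf{U}_k\mathbf{g}_k$ from \textbf{Assumption~\ref{asm:channel}} and write
\begin{equation*}
\|\mathbf{V}^H\mathbf{h}_k\|^2=\mathrm{tr}\bigl(\mathbf{V}^H\mathbf{U}_k\mathbf{g}_k\mathbf{g}_k^H\mathbf{U}_k^H\mathbf{V}\bigr).
\end{equation*}
Taking expectations and using linearity of trace with $\mathbb{E}[\mathbf{g}_k\mathbf{g}_k^H]=\mathbf{R}_k$ gives
\begin{equation*}
\mathbb{E}\bigl[\|\mathbf{V}^H\mathbf{h}_k\|^2\bigr]=\mathrm{tr}\bigl(\mathbf{V}^H\mathbf{U}_k\mathbf{R}_k\mathbf{U}_k^H\mathbf{V}\bigr).
\end{equation*}
This is the same manipulation used in the proof of \textbf{Proposition~\ref{prop:overlap}}. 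Inserting the isotropic $\mathbf{R}_k=(\beta_k/Q)\mathbf{I}_Q$ immediately yields $(\beta_k/Q)\|\mathbf{V}^H\mathbf{U}_k\|_F^2$, which is~\eqref{eq:expected_capture_formula}.

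\emph{Reduction to a trace problem.} Summing over $k\in\mathcal{K}$, I rewrite each term as
\begin{equation*}
\|\mathbf{V}^H\mathbf{U}_k\|_F^2=\mathrm{tr}\bigl(\mathbf{V}^H\mathbf{U}_k\mathbf{U}_k^H\mathbf{V}\bigr).
\end{equation*}
The objective then becomes
\begin{equation*}
\frac{1}{Q}\sum_{\ell\in\mathcal{K}}\beta_\ell\;\mathrm{tr}\bigl(\mathbf{V}^H\mathbf{R}_{\mathcal{K}}^\star\mathbf{V}\bigr),
\qquad \mathbf{R}_{\mathcal{K}}^\star=\sum_{k\in\mathcal{K}}\omega_k^\star\mathbf{U}_k\mathbf{U}_k^H .
\end{equation*}
The prefactor $\frac{1}{Q}\sum_\ell\beta_\ell$ is a positive constant independent of $\mathbf{V}$, so problem~\eqref{eq:oracle_group_capture_opt} is equivalent to maximizing $\mathrm{tr}(\mathbf{V}^H\mathbf{R}_{\mathcal{K}}^\star\mathbf{V})$ subject to $\mathbf{V}^H\mathbf{V}=\mathbf{I}_M$. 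The matrix $\mathbf{R}_{\mathcal{K}}^\star$ is Hermitian positive semidefinite, being a nonnegative combination of orthogonal projectors.

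\emph{Ky Fan step.} Let $\mathbf{R}_{\mathcal{K}}^\star=\mathbf{E}\bm{\Lambda}\mathbf{E}^H$ with eigenvalues $\lambda_1\ge\cdots\ge\lambda_{N_t}\ge 0$, and set $\mathbf{C}=\mathbf{E}^H\mathbf{V}$. Then
\begin{equation*}
\mathrm{tr}(\mathbf{V}^H\mathbf{R}_{\mathcal{K}}^\star\mathbf{V})=\sum_n\lambda_n c_n,
\qquad c_n\triangleq\|\mathbf{C}_{n,:}\|^2 .
\end{equation*}
The weights satisfy $0\le c_n\le 1$, because $\mathbf{C}$ has orthonormal columns and so $\mathbf{C}\mathbf{C}^H\preceq\mathbf{I}$. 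They also satisfy $\sum_n c_n=M$. A linear function of $(c_n)$ over this polytope is maximized by placing unit weight on the $M$ largest eigenvalues. Hence the objective is at most $\sum_{m\le M}\lambda_m$, with equality when $\mathbf{V}$ spans the top-$M$ eigenspace, i.e., $\mathbf{V}=\mathrm{TopEig}_M(\mathbf{R}_{\mathcal{K}}^\star)$.

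The only delicate point is this Ky Fan step, namely justifying the bounds $0\le c_n\le1$ and $\sum_n c_n=M$. I would also remark on ties at $\lambda_M=\lambda_{M+1}$: in that case the optimizer is unique only up to the choice of basis within the degenerate eigenspace, which matches the statement's phrase ``an orthonormal basis.'' The rest is routine algebra.
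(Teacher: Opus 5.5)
Your proof is correct and follows the paper's argument: substitute $\mathbf{h}_k=\mathbf{U}_k\mathbf{g}_k$, take the expectation of the trace form, sum over $\mathcal{K}$ to get $\frac{1}{Q}\sum_{\ell}\beta_\ell\,\mathrm{tr}(\mathbf{V}^H\mathbf{R}_{\mathcal{K}}^\star\mathbf{V})$, and apply Ky Fan. The only difference is that you prove the Ky Fan step inline via the weights $c_n$ instead of citing Horn--Johnson. One small correction: under a tie $\lambda_M=\lambda_{M+1}$, the optimal subspace itself becomes non-unique, not just its basis, which is why the paper asserts uniqueness only under a spectral gap at position $M$.
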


\begin{IEEEproof}
\textbf{Assumption~\ref{asm:channel}} and the isotropic model leads to
\begin{align}
	\mathbb{E}\!\left[\|\mathbf{V}^H\mathbf{h}_k\|^2\right] &= \mathbb{E}\bigl[\|\mathbf{V}^H\mathbf{U}_k\mathbf{g}_k\|^2\bigr] = \mathrm{tr}\!\bigl(\mathbf{V}^H\mathbf{U}_k\mathbf{R}_k \mathbf{U}_k^H\mathbf{V}\bigr) \notag \\ &= \frac{\beta_k}{Q}\bigl\|\mathbf{V}^H\mathbf{U}_k\bigr\|_F^2,
\end{align}
establishing~\eqref{eq:expected_capture_formula}. Summing over users yields
\begin{equation}
	\sum_{k \in \mathcal{K}} \mathbb{E}\!\left[\|\mathbf{V}^H\mathbf{h}_k\|^2\right] = \frac{\sum_{\ell\in\mathcal{K}}\beta_\ell}{Q} \mathrm{tr}\!\bigl(\mathbf{V}^H \mathbf{R}_{\mathcal{K}}^\star \mathbf{V}\bigr).
\end{equation}
By the Ky Fan maximum principle~\cite{HornJohnson2012MatrixAnalysis}, the Stiefel maximizer of $\mathrm{tr}(\mathbf{V}^H\mathbf{A}\mathbf{V})$ subject to $\mathbf{V}^H\mathbf{V} = \mathbf{I}_M$ consists of the top-$M$ eigenvectors of $\mathbf{A}$ and is unique when $\mathbf{A}$ has a spectral gap at position $M$.
\end{IEEEproof}

\textbf{Proposition~\ref{prop:svd_opt}} motivates an implementable surrogate for problem~\eqref{eq:oracle_group_capture_opt} by replacing the unavailable $\mathbf{U}_k$ and $\beta_k$ with the RSRP-predicted $\widehat{\mathbf{U}}_k$ and $\widehat{\beta}_k$. With $\omega_k=\widehat{\beta}_k/\sum_{\ell\in\mathcal{K}}\widehat{\beta}_\ell$, the predicted capture problem becomes
\begin{equation}
	\max_{\mathbf{V}^H\mathbf{V}=\mathbf{I}_M} \sum_{k \in \mathcal{K}} \omega_k \bigl\|\mathbf{V}^H\widehat{\mathbf{U}}_k\bigr\|_F^2.
	\label{eq:pred_group_capture_opt}
\end{equation}
By defining the corresponding \emph{predicted group spatial covariance}
\begin{equation}
	\widehat{\mathbf{R}}_{\mathcal{K}} = \sum_{k \in \mathcal{K}} \omega_k\, \widehat{\mathbf{U}}_k \widehat{\mathbf{U}}_k^H,
	\label{eq:group_profile}
\end{equation}
the solution of problem~\eqref{eq:pred_group_capture_opt} is the dominant $M$-dimensional eigenspace of $\widehat{\mathbf{R}}_{\mathcal{K}}$, i.e.,
\begin{equation}
	\mathbf{V}_{\mathcal{K},M} = \mathrm{TopEig}_M\!\bigl(\widehat{\mathbf{R}}_{\mathcal{K}}\bigr) \in \mathbb{C}^{N_t \times M},
	\label{eq:group_subspace}
\end{equation}
whose columns form an orthonormal CSI-RS acquisition basis. $\mathcal{V}_{\mathcal{K},M}=\mathrm{span}(\mathbf{V}_{\mathcal{K},M})$ is the resulting group-level acquisition subspace, which preserves the dominant weighted spatial components of the scheduled group.
\\ \indent
Following the \textbf{Stage 2} CSI-RS transmission model in~\eqref{eq:group_csi_received}, each selected user estimates and feeds back the low-dimensional effective channel
\begin{equation}
	\widetilde{\mathbf{h}}_k = \mathbf{V}_{\mathcal{K},M}^H \mathbf{h}_k    \in \mathbb{C}^M,
	\label{eq:effective_channel}
\end{equation}
which contains the coordinates of the channel component captured by $\mathcal{V}_{\mathcal{K},M}$. Equivalently, the above equation can be read as
\begin{equation}
	\mathbf{V}_{\mathcal{K},M}\widetilde{\mathbf{h}}_k = \mathbf{V}_{\mathcal{K},M}\mathbf{V}_{\mathcal{K},M}^H\mathbf{h}_k = \mathbf{P}_{\mathcal{V}_{\mathcal{K},M}}\mathbf{h}_k,
	\label{eq:effective_projection_relation}
\end{equation}
where $\mathbf{P}_{\mathcal{V}_{\mathcal{K},M}}$ is the orthogonal projector onto the group acquisition subspace. The BS then collects all effective channels into
\begin{equation}
	\widetilde{\mathbf{H}}_{\mathcal{K}} = \begin{bmatrix}
		\widetilde{\mathbf{h}}_{k_1}^H \\ \vdots \\ \widetilde{\mathbf{h}}_{k_{K}}^H
	\end{bmatrix}
	\in \mathbb{C}^{K \times M}.
	\label{eq:Heff}
\end{equation}

\subsection{Low-Dimensional RZF Beamforming and Conditioning}
\label{sec:rzf}
After the BS obtains the effective MU channel $\widetilde{\mathbf{H}}_{\mathcal{K}}$, it computes the MU beamforming matrix from the $M$-dimensional effective channel induced by the group acquisition subspace $\mathcal{V}_{\mathcal{K},M}$. Following the standard RZF form~\cite{Peel2005RZF}, the reduced-domain beamforming matrix is
\begin{equation}
	\widetilde{\mathbf{W}}_{\rm RZF} = \widetilde{\mathbf{H}}_{\mathcal{K}}^H \Bigl(\widetilde{\mathbf{H}}_{\mathcal{K}} \widetilde{\mathbf{H}}_{\mathcal{K}}^H + \frac{K\sigma_n^2}{P_t}\mathbf{I}_K\Bigr)^{-1} \in \mathbb{C}^{M \times K}.
	\label{eq:rzf_lowdim}
\end{equation}
The antenna-domain beamforming vector for user $k$ is obtained by mapping the $k$-th reduced-domain beamforming vector back through the group acquisition basis with unit-norm normalization:
\begin{equation}
	\mathbf{w}_k = \frac{\mathbf{V}_{\mathcal{K},M}\, \widetilde{\mathbf{w}}_k} {\|\mathbf{V}_{\mathcal{K},M}\, \widetilde{\mathbf{w}}_k\|} = \frac{\mathbf{V}_{\mathcal{K},M}\, \widetilde{\mathbf{w}}_k} {\| \widetilde{\mathbf{w}}_k\|},
	\label{eq:W_full}
\end{equation}
where $\widetilde{\mathbf{w}}_k=[\widetilde{\mathbf{W}}_{\rm RZF}]_{:,k}$ and the second equality follows from the orthonormality of $\mathbf{V}_{\mathcal{K},M}$.

\begin{remark}[\emph{RZF beamforming within the group subspace}]\normalfont
	The reduced-domain RZF in~\eqref{eq:rzf_lowdim} is equivalent to applying RZF to the antenna-domain channel restricted to $\mathcal{V}_{\mathcal{K},M}$. In other words, for any reduced-domain beamforming vector $\widetilde{\mathbf{w}}_j$, the coupling from user $j$'s stream to user $i$ is
	\begin{equation}
		\mathbf{h}_i^H\mathbf{V}_{\mathcal{K},M}\widetilde{\mathbf{w}}_j = \bigl(\mathbf{V}_{\mathcal{K},M}^H\mathbf{h}_i\bigr)^H\widetilde{\mathbf{w}}_j = \widetilde{\mathbf{h}}_i^H\widetilde{\mathbf{w}}_j.
		\label{eq:effective_io_equiv}
	\end{equation}
	Thus, the desired signal and IUI are governed by the effective channels once the beamforming vectors are restricted to the group subspace. Computing RZF from $\widetilde{\mathbf{H}}_{\mathcal K}$ therefore preserves the MU input-output relation within the acquired subspace while reducing the matrix computation from dimension $N_t$ to $M$.
\end{remark}

After the RZF beamforming matrix is established, the remaining issue is whether the acquired effective MU channel is sufficiently well conditioned for interference suppression. This depends on two factors: the group subspace must retain enough channel energy for each scheduled user, and the resulting effective channel directions must remain separated. The first factor is quantified by the individual channel capture efficiency, which is defined by
\begin{equation}
	\chi_k = \frac{\|\mathbf{P}_{\mathcal{V}_{\mathcal{K},M}}\mathbf{h}_k\|^2} {\|\mathbf{h}_k\|^2} = \frac{\|\mathbf{V}_{\mathcal{K},M}^H \mathbf{h}_k\|^2} {\|\mathbf{h}_k\|^2} \in [0,1],
	\label{eq:chi_true}
\end{equation}
where $\mathbf{P}_{\mathcal{V}_{\mathcal{K},M}} =\mathbf{V}_{\mathcal{K},M}\mathbf{V}_{\mathcal{K},M}^H$ denotes the orthogonal projector onto the group subspace. It measures the fraction of the channel energy of user $k$ retained by $\mathcal{V}_{\mathcal{K},M}$ and is evaluated in the first part of Section~\ref{sec:results_stage2}. 
\\ \indent 
Given that the captured channel gains are accounted for by $\chi_k$, the second factor is the directional separability of the effective channels. Let $\mathbf{z}_k=\widetilde{\mathbf{h}}_k/\|\widetilde{\mathbf{h}}_k\|$ and stack $\{\mathbf{z}_k^H\}_{k\in\mathcal{K}}$ into $\mathbf{Z}_{\mathcal{K}}\in\mathbb{C}^{K\times M}$. The normalized Gram matrix for the  effective channel matrix is given by
\begin{equation}
	\widetilde{\mathbf{G}} = \mathbf{Z}_{\mathcal{K}}\mathbf{Z}_{\mathcal{K}}^H,
	\label{eq:gram}
\end{equation}
whose diagonal entries are one and off-diagonal entries are the normalized overlaps among effective channels. Hence, a near-identity $\widetilde{\mathbf{G}}$ indicates that the users remain well separated after projection, whereas large off-diagonal entries imply stronger inter-user coupling after CSI acquisition. We summarize this separability by the condition number of $\widetilde{\mathbf{G}}$, i.e.,
\begin{equation}
	\kappa(\widetilde{\mathbf{G}}) = \frac{\lambda_{\max}(\widetilde{\mathbf{G}})}{\lambda_{\min}(\widetilde{\mathbf{G}})}.
	\label{eq:gram_kappa}
\end{equation}
A value close to one indicates nearly orthogonal effective channels, while a large value indicates an ill-conditioned effective MU channel. Its connection to RZF follows from the decomposition
\begin{equation}
	\widetilde{\mathbf{H}}_{\mathcal{K}}\widetilde{\mathbf{H}}_{\mathcal{K}}^H
	= \mathbf{D}_{\mathcal{K}}\widetilde{\mathbf{G}}\mathbf{D}_{\mathcal{K}},
	\label{eq:rzf_gram_decomp}
\end{equation}
where $\mathbf{D}_{\mathcal{K}}=\mathrm{diag}(\|\widetilde{\mathbf{h}}_{k_1}\|,\ldots,\|\widetilde{\mathbf{h}}_{k_K}\|)$ contains the captured effective-channel magnitudes. The matrix inverted by RZF in~\eqref{eq:rzf_lowdim} is therefore $\mathbf{D}_{\mathcal{K}}\widetilde{\mathbf{G}}\mathbf{D}_{\mathcal{K}}+\alpha_{\rm RZF}\mathbf{I}_K$, with $\alpha_{\rm RZF}=K\sigma_n^2/P_t$. Hence, after the per-user captured gains are accounted for by $\chi_k$, $\widetilde{\mathbf{G}}$ determines the directional conditioning of the RZF. In the equal-gain case, the RZF eigenmode gains scale as $(c^2\lambda_i(\widetilde{\mathbf{G}})+\alpha_{\rm RZF})^{-1}$, so a small $\lambda_{\min}(\widetilde{\mathbf{G}})$, or equivalently a large $\kappa(\widetilde{\mathbf{G}})$, indicates stronger noise and CSI-error amplification. This conditioning metric is evaluated empirically in the second part of Section~\ref{sec:results_stage2}. The following theorem further explains when the projection onto the group subspace preserves effective-channel separability.

\begin{theorem}[\emph{Projected-channel overlap bound}]
\label{thm:main}\normalfont
Let $\mathbf{G}_{\mathcal{K}}^{\rm full}$ denote the normalized Gram matrix of the full-dimensional channels, whose $(i,j)$-th entry is $[\mathbf{G}_{\mathcal{K}}^{\rm full}]_{ij} =\mathbf{h}_i^H\mathbf{h}_j/(\|\mathbf{h}_i\|\|\mathbf{h}_j\|)$. Suppose that:
\begin{enumerate}
	\item \emph{Capture:} $\chi_k \geq 1 - \delta$ for all $k \in \mathcal{K}$, and
	\item \emph{Full-dimensional separability:} $|[\mathbf{G}_{\mathcal{K}}^{\rm full}]_{ij}| \leq \epsilon$ for all $i \neq j$, $i,j \in \mathcal{K}$.
\end{enumerate}
Then the off-diagonal entries of the effective Gram matrix satisfy
\begin{equation}
	|[\widetilde{\mathbf{G}}]_{ij}| \leq \frac{\epsilon + \delta}{1-\delta} \;\triangleq\; \tilde{\delta},
	\label{eq:overlap_bound}
\end{equation}
\end{theorem}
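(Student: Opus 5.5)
Write $\mathbf{P}\triangleq\mathbf{P}_{\mathcal{V}_{\mathcal{K},M}}$ and $\mathbf{P}^\perp\triangleq\mathbf{I}_{N_t}-\mathbf{P}$. The plan is to split each channel into its captured and residual parts, $\mathbf{h}_k=\mathbf{P}\mathbf{h}_k+\mathbf{P}^\perp\mathbf{h}_k$. By \eqref{eq:effective_projection_relation} and orthonormality of $\mathbf{V}_{\mathcal{K},M}$, the effective inner product is $\widetilde{\mathbf{h}}_i^H\widetilde{\mathbf{h}}_j=\mathbf{h}_i^H\mathbf{P}\mathbf{h}_j$, and $\|\widetilde{\mathbf{h}}_k\|^2=\chi_k\|\mathbf{h}_k\|^2$ by \eqref{eq:chi_true}. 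Hence
\begin{equation*}
	[\widetilde{\mathbf{G}}]_{ij}=\frac{\mathbf{h}_i^H\mathbf{P}\mathbf{h}_j}{\sqrt{\chi_i\chi_j}\,\|\mathbf{h}_i\|\|\mathbf{h}_j\|}.
\end{equation*}
The denominator is handled by the capture assumption, since $\sqrt{\chi_i\chi_j}\geq 1-\delta$. It then suffices to bound the normalized numerator by $\epsilon+\delta$.

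For the numerator we use $\mathbf{P}=\mathbf{I}-\mathbf{P}^\perp$ to write
\begin{equation*}
	\mathbf{h}_i^H\mathbf{P}\mathbf{h}_j=\mathbf{h}_i^H\mathbf{h}_j-(\mathbf{P}^\perp\mathbf{h}_i)^H(\mathbf{P}^\perp\mathbf{h}_j).
\end{equation*}
This holds because $\mathbf{P}^\perp$ is Hermitian and idempotent. Applying the triangle inequality and then Cauchy--Schwarz to the second term gives
\begin{equation*}
	|\mathbf{h}_i^H\mathbf{P}\mathbf{h}_j|\leq|\mathbf{h}_i^H\mathbf{h}_j|+\|\mathbf{P}^\perp\mathbf{h}_i\|\|\mathbf{P}^\perp\mathbf{h}_j\|.
\end{equation*}
Dividing by $\|\mathbf{h}_i\|\|\mathbf{h}_j\|$, the first term is at most $\epsilon$ by the separability hypothesis. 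The second term equals $\sqrt{(1-\chi_i)(1-\chi_j)}\leq\delta$, because $\|\mathbf{P}^\perp\mathbf{h}_k\|^2/\|\mathbf{h}_k\|^2=1-\chi_k\leq\delta$.

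Combining the two bounds gives $|[\widetilde{\mathbf{G}}]_{ij}|\leq(\epsilon+\delta)/(1-\delta)=\tilde{\delta}$. The implicit requirement $\delta<1$ guarantees that $\widetilde{\mathbf{h}}_k\neq\mathbf{0}$, so $\mathbf{z}_k$ is well defined.

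The only step needing care is the numerator decomposition. Expanding instead as $(\mathbf{h}_i-\mathbf{P}^\perp\mathbf{h}_i)^H(\mathbf{h}_j-\mathbf{P}^\perp\mathbf{h}_j)$ and bounding the cross terms separately would yield a weaker $\sqrt{\delta}$-type error. The identity $\mathbf{h}_i^H\mathbf{P}^\perp\mathbf{h}_j=(\mathbf{P}^\perp\mathbf{h}_i)^H(\mathbf{P}^\perp\mathbf{h}_j)$ avoids this and produces the residual-product term, which is linear in $\delta$. The remaining steps are direct.
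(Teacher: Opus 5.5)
Your proof is correct and follows essentially the same route as the paper: you use the orthogonal split $\mathbf{h}_i^H\mathbf{h}_j=(\mathbf{P}\mathbf{h}_i)^H(\mathbf{P}\mathbf{h}_j)+(\mathbf{P}^\perp\mathbf{h}_i)^H(\mathbf{P}^\perp\mathbf{h}_j)$, then the triangle and Cauchy--Schwarz inequalities to get the $(\epsilon+\delta)\|\mathbf{h}_i\|\|\mathbf{h}_j\|$ numerator bound. You then normalize by $\|\widetilde{\mathbf{h}}_i\|\|\widetilde{\mathbf{h}}_j\|\geq(1-\delta)\|\mathbf{h}_i\|\|\mathbf{h}_j\|$. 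Your explicit note that $\delta<1$ is needed for $\mathbf{z}_k$ to be well defined is a small but welcome addition.
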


\begin{IEEEproof}
Let $\mathbf{P}$ be any orthogonal projector and define $\mathbf{P}_{\perp}=\mathbf{I}-\mathbf{P}$ as the projector onto its orthogonal complement. Then each channel can be decomposed into the retained component $\mathbf{P}\mathbf{h}_k$ and the residual component $\mathbf{P}_{\perp}\mathbf{h}_k$. Since these two components lie in orthogonal subspaces, expanding $\mathbf{h}_i^H\mathbf{h}_j$ gives
\begin{equation}
	\mathbf{h}_i^H \mathbf{h}_j = (\mathbf{P}\mathbf{h}_i)^H(\mathbf{P}\mathbf{h}_j) + (\mathbf{P}_{\perp}\mathbf{h}_i)^H (\mathbf{P}_{\perp}\mathbf{h}_j).
	\label{eq:pythag}
\end{equation}
For the proposed group acquisition subspace, taking $\mathbf{P}=\mathbf{P}_{\mathcal{V}_{\mathcal{K},M}}$, the effective-channel inner product equals the inner product between the retained channel components:
\begin{equation}
	\widetilde{\mathbf{h}}_i^H\widetilde{\mathbf{h}}_j =(\mathbf{P}\mathbf{h}_i)^H(\mathbf{P}\mathbf{h}_j).
	\label{eq:effective_inner_projection}
\end{equation}
Combining~\eqref{eq:pythag} and~\eqref{eq:effective_inner_projection}, and then applying the triangle inequality and Cauchy--Schwarz inequality, yields
\begin{equation}
	|\widetilde{\mathbf{h}}_i^H \widetilde{\mathbf{h}}_j| \leq |\mathbf{h}_i^H \mathbf{h}_j| + \|\mathbf{P}_{\perp}\mathbf{h}_i\| \|\mathbf{P}_{\perp}\mathbf{h}_j\|.
	\label{eq:proof_overlap_mid}
\end{equation}
The separability condition gives $|\mathbf{h}_i^H\mathbf{h}_j| \leq \epsilon\|\mathbf{h}_i\|\|\mathbf{h}_j\|$. The capture condition gives $\|\mathbf{P}_{\perp}\mathbf{h}_k\|^2 = (1-\chi_k)\|\mathbf{h}_k\|^2 \leq \delta\|\mathbf{h}_k\|^2$. Substituting these two bounds into~\eqref{eq:proof_overlap_mid} gives
\begin{equation}
	|\widetilde{\mathbf{h}}_i^H \widetilde{\mathbf{h}}_j| \leq (\epsilon+\delta)\|\mathbf{h}_i\|\|\mathbf{h}_j\|. 
	\label{eq:proof_overlap}
\end{equation}
Moreover, $\|\widetilde{\mathbf{h}}_k\|=\|\mathbf{P}\mathbf{h}_k\| =\sqrt{\chi_k}\|\mathbf{h}_k\| \geq \sqrt{1-\delta}\|\mathbf{h}_k\|$. Thus, $\|\widetilde{\mathbf{h}}_i\|\|\widetilde{\mathbf{h}}_j\| \geq (1-\delta)\|\mathbf{h}_i\|\|\mathbf{h}_j\|$. Normalizing~\eqref{eq:proof_overlap} by $\|\widetilde{\mathbf{h}}_i\|\|\widetilde{\mathbf{h}}_j\|$ yields~\eqref{eq:overlap_bound}.
\end{IEEEproof}

\textbf{Theorem~\ref{thm:main}} gives a sufficient condition under which projection onto the group acquisition subspace preserves pairwise effective-channel separation. It does not provide a global performance guarantee, but explains the mechanism behind the proposed low-dimensional beamforming: when the selected users have low full-dimensional channel coupling and the group subspace retains most of their channel energy, the projected effective channels remain well separated. Consequently, the impact of restricting beamforming to $\mathcal{V}_{\mathcal{K},M}$ is governed by two quantities: the residual full-dimensional channel coupling after grouping and the channel energy discarded by group-subspace acquisition. This establishes the theoretical link from pre-CSI grouping and dimension-constrained CSI acquisition to stable low-dimensional RZF, which is later reflected in the conditioning CDFs and effective-rate results.
\vspace{-0.5cm}
\subsection{Online Computational Complexity}
In the conventional Type-II limited feedback pipeline, the BS first computes all pairwise normalized channel inner products over the candidate pool $\mathcal{K}_{\rm full}$, which costs $\mathcal{O}(|\mathcal{K}_{\rm full}|^2N_t)$~\cite{Yoo2006SUS}. Once these pairwise compatibility scores are available, greedy SUS selection requires $\mathcal{O}(K|\mathcal{K}_{\rm full}|)$ operations to form a size-$K$ group. The resulting full-dimensional RZF beamforming matrix then requires $\mathcal{O}(K^2N_t+K^3)$ operations for the dense matrix products and the $K\times K$ regularized inverse~\cite{Peel2005RZF,GolubVanLoan2013MatrixComputations}. At the user side, LS channel estimation from the $N_t$-port CSI-RS costs $\mathcal{O}(N_tL_c)$, while orthogonal matching pursuit (OMP)-based Type-II feedback generation costs $\mathcal{O}(O_DQN_t^2+N_tQ^2+Q^3)$ per user~\cite{3GPP38214,3GPP38215,Tropp2007OMP}. Therefore, the computational costs of BS-side and user-side processing are dominated by $\mathcal{O}(|\mathcal{K}_{\rm full}|^2N_t)$ and $\mathcal{O}((1+QO_D)N_t^2)$, respectively, when $K \ll |\mathcal{K}_{\rm full}|$, $L_c = N_t$, and $Q\ll N_t$.
\\ \indent
In the proposed framework, the BS first applies the learned predictor $\Psi_\theta$ to all RSRP fingerprints collected from $\mathcal{K}_{\rm full}$. Following the single-user site-specific Type-II design~\cite{prior_singleuser}, $\Psi_\theta$ is a three-layer MLP with hidden width $W$, so the prediction cost is $\mathcal{O}(|\mathcal{K}_{\rm full}|(BW+2W^2+2N_tQW))$. The BS then computes the predicted pairwise subspace overlaps with complexity $\mathcal{O}(|\mathcal{K}_{\rm full}|^2Q^2)$ and applies the greedy selection rule in~\eqref{eq:greedy} with $\mathcal{O}(K|\mathcal{K}_{\rm full}|)$ operations. After the group is selected, constructing the group subspace requires a partial singular value decomposition (SVD) of an $N_t\times KQ$ weighted concatenation, with cost $\mathcal{O}(N_tKQM)$~\cite{GolubVanLoan2013MatrixComputations}. The low-dimensional RZF beamforming matrix then costs $\mathcal{O}(K^2M+K^3)$, followed by the $\mathcal{O}(N_tKM)$ basis mapping to the antenna domain. At the user side, only an LS estimate of the $M$-dimensional effective channel is required, with complexity $\mathcal{O}(ML_{\mathcal{K}})$, which becomes $\mathcal{O}(M^2)$ for $L_{\mathcal{K}}=M$. Therefore, the proposed BS-side processing mainly scales as $\mathcal{O}(|\mathcal{K}_{\rm full}|(BW+W^2+N_tQW)+|\mathcal{K}_{\rm full}|^2Q^2+N_tKQM)$, while the user-side processing scales as $\mathcal{O}(M^2)$.
\\ \indent
Compared with the conventional pipeline, the reduction is twofold. At the BS, the dominant candidate-pool compatibility term is reduced from scaling with $N_t$ to scaling with the predicted subspace dimension $Q$. At the user side, the $\mathcal{O}((1+QO_D)N_t^2)$ Type-II beam search is avoided and replaced by $\mathcal{O}(M^2)$ effective-channel estimation. Hence, the main online processing is shifted from resource-limited users to the BS, where the added site-specific prediction and subspace construction are controlled by $Q,M\ll N_t$.

\section{Numerical Results}
\label{sec:results}
In this section, numerical results are provided to verify the proposed site-specific MU-MIMO framework from pre-CSI compatibility prediction to group-subspace acquisition and low-dimensional beamforming. The simulations are based on the DeepMIMO dataset~\cite{Alkhateeb2019DeepMIMO}. We mainly adopt four 3.5-GHz scenarios: \texttt{asu\_campus\_3p5} (outdoor campus, rich multipath), \texttt{boston5g\_3p5} (complex city, high diversity), \texttt{city\_0\_newyork\_3p5} (dense urban), and \texttt{city\_2\_chicago\_3p5} (regular grid, lower angular spread). Key simulation parameters are summarized in Table~\ref{tab:setup}.

\begin{table}[t]
	\centering
	\caption{Simulation Parameters}
	\label{tab:setup}
	\renewcommand{\arraystretch}{1.1}
	\small
	\begin{tabular}{lc}
		\toprule
    	Parameter & Value \\
    	\midrule
    	Carrier frequency & 3.5 GHz \\
    	Bandwidth & 10 MHz \\
    	BS antennas $N_t$ & 64 \\
    	Antenna spacing $d$ & $\lambda$/2 \\
    	SSB codebook size $B$ & 16 \\
    	User subspace dimension $Q$ & 4 \\
    	Number of candidate users $|\mathcal{K}_{\rm full}|$ & 40 \\
    	Beam-management update-rate factor $\zeta_{\rm BM}$ & 0.05 \\
    	Coherence block channel uses $T_c$ & 1000 \\
    	Transmit power $P_t$ & 40 dBm \\
    	Noise power spectrum density & $-170$ dBm/Hz \\
    	MLP depth $D$ / width $W$ & 3 / 256 \\
    	\bottomrule
	\end{tabular}
	\vspace{-0.3cm}
\end{table}

\indent The probing codebook $\mathbf{B}_\theta$ and subspace predictor $\Psi_\theta$ are trained following~\cite{prior_singleuser}. Each simulation is evaluated over 500 Monte Carlo experiments with a candidate pool of $|\mathcal{K}_{\rm full}|=40$ users drawn uniformly at random. The online pre-CSI subspace-SUS scheduler uses the deterministic initialization in~\eqref{eq:greedy_seed}. For numerical averaging, each grouping method is also repeated from 50 randomly chosen initial users in each pool to average out seed-dependent group composition. Hence, all reported metrics are averaged over the resulting $500\times 50=25{,}000$ groups per method. The beam-management update-rate factor is set to $\zeta_{\rm BM}=0.05$, which means that one SSB-based RSRP fingerprint is reused over roughly $20$ CSI-RS acquisition intervals. This reflects the timescale separation in NR: SSB-based beam management typically uses a period on the order of $20$~ms, whereas CSI-RS for CSI acquisition is configured on the order of $1$~ms~\cite{Giordani2019BeamManagementNR,3GPP38215}. 

\subsection{Evaluation of Pre-CSI Compatibility Prediction}
\label{sec:results_stage1}
We first evaluate the pre-CSI compatibility analysis in Section~\ref{sec:compatibility}, from pairwise overlap prediction to bottleneck-based group selection.

\subsubsection{Pair-Level Prediction}
At the pair level, we evaluate whether the predicted overlap $\widehat{\gamma}_{ij}$ in~\eqref{eq:gamma_hat} can correctly rank user-pair compatibility. Here, $\widehat{\gamma}_{ij}$ measures the overlap between two RSRP-predicted transmit subspaces, whereas $\gamma_{ij}^{\rm CSI}$ in~\eqref{eq:csi_pair_overlap} measures the instantaneous normalized channel overlap from true CSI. The subspace-based proxy is not required to match the instantaneous channel overlap in value. Preserving the compatibility order is sufficient for the subsequent greedy grouping. For each candidate pool, all $\binom{40}{2}=780$ user pairs are scored by these two metrics, as shown in Fig.~\ref{fig:pair_overlap_scatter}.

\begin{figure}[t]
	\centering
	\subfloat[ASU, $\rho_s=0.74$]{
		\includegraphics[width=0.44\linewidth]{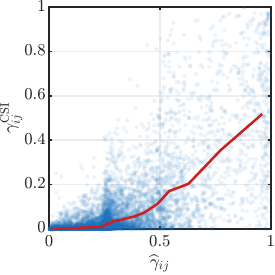}}
	\hfill
	\subfloat[Boston, $\rho_s=0.78$]{
		\includegraphics[width=0.44\linewidth]{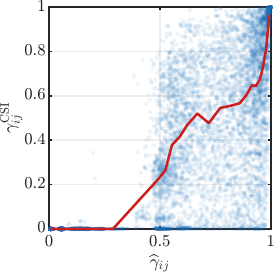}}\\[0.4em]
	\subfloat[New York, $\rho_s=0.64$]{
		\includegraphics[width=0.44\linewidth]{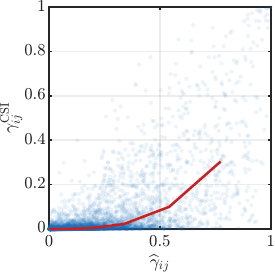}}
	\hfill
	\subfloat[Chicago, $\rho_s=0.87$]{
		\includegraphics[width=0.44\linewidth]{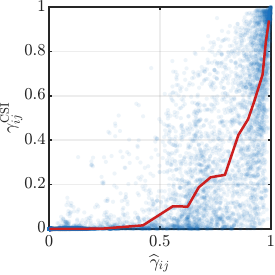}}
	\caption{Pair-level relation between the predicted subspace overlap $\widehat{\gamma}_{ij}$ and the true CSI-domain channel overlap $\gamma_{ij}^{\rm CSI}$ across four DeepMIMO scenarios. The red curves denote rolling median values over sorted predicted-overlap samples.}
	\label{fig:pair_overlap_scatter}
	\vspace{-0.5cm}
\end{figure}

\indent The Spearman rank correlation $\rho_s$ reported in each subcaption measures whether $\widehat{\gamma}_{ij}$ and $\gamma_{ij}^{\rm CSI}$ give consistent ordering of user pairs. It is positive in all scenarios, ranging from $0.64$ in New York to $0.87$ in Chicago. The rolling median curves in Fig.~\ref{fig:pair_overlap_scatter} also increase with $\widehat{\gamma}_{ij}$, indicating that pairs predicted to have larger subspace overlap tend to exhibit larger instantaneous channel overlap. The scatter is wider in dense urban scenarios, especially New York, because richer angular overlap and small-scale fading weaken the pointwise relation. These results support using $\widehat{\gamma}_{ij}$ as a pre-CSI pairwise compatibility metric, which is further tested at the group level next.

\subsubsection{Group-Level Selection}
We next test whether the bottleneck metric in~\eqref{eq:group_overlap_def} and the greedy rule in~\eqref{eq:greedy} form spatially separated MU groups, as suggested by \textbf{Proposition~\ref{prop:group_coherence}}. We compare the proposed pre-CSI subspace-SUS with random grouping, SUS with Type-II feedback, and SUS with full CSI. Each approach forms a size-$K$ group, and the resulting worst-pair overlap $\Gamma^{\rm CSI}(\mathcal{K})=\max_{i<j\in\mathcal{K}}\gamma_{ij}^{\rm CSI}$ and average overlap $\bar{\gamma}^{\rm CSI}(\mathcal{K})$ are evaluated using the true CSI. The two metrics reflect the bottleneck and average inter-user spatial coupling, respectively. Lower values indicate groups that are easier to separate by MU beamforming. $K$ is set to 4 in this experiment.

\begin{table}[t]
	\centering
	\caption{CSI-domain channel overlap evaluation after grouping with worst-pair overlap $\Gamma^{\rm CSI}$/average overlap $\bar{\gamma}^{\rm CSI}$}
	\label{tab:stage1_group}
	\renewcommand{\arraystretch}{1.1}
	\small
	\setlength{\tabcolsep}{2pt}
	\begin{tabular*}{\columnwidth}{@{}lcccc@{}}
    	\toprule
    	Scenario & Random & \textbf{Proposed} & Type-II & Full-CSI \\
    	\midrule
    	ASU & $0.317/0.090$ & $\mathbf{0.051/0.014}$ & $0.037/0.010$ & $0.007/0.002$ \\
    	Boston & $0.775/0.367$ & $\mathbf{0.174/0.032}$ & $0.023/0.005$ & $0.019/0.004$ \\
    	New York & $0.164/0.035$ & $\mathbf{0.009/0.002}$ & $0.010/0.002$ & $0.000/0.000$ \\
    	Chicago & $0.627/0.229$ & $\mathbf{0.022/0.005}$ & $0.007/0.002$ & $0.002/0.000$ \\
    	\bottomrule
	\end{tabular*}
\end{table}

Table~\ref{tab:stage1_group} shows that compared with the random selection, the proposed grouping consistently brings a substantial reduction of the true worst-pair overlap, by about $4$--$30\times$ across the four scenarios. The full-CSI SUS reference gives the best overlap as it uses perfect CSI, although a lower overlap does not necessarily imply the maximum sum-rate because SUS does not directly optimize channel strength or the final RZF rate. The Type-II baseline lies between the proposed method and this reference because it uses compressed post-CSI feedback for grouping. The proposed method remains close to Type-II grouping in most scenarios while using only pre-CSI subspace information, with the largest gap appearing in Boston, where richer scattering makes pre-CSI prediction more difficult. These results confirm that the pair- and group-level pre-CSI predictions are sufficiently reliable for forming a compatible scheduled group, which is the input to the subsequent group-specific CSI acquisition and MU beamforming stages.
\vspace{-0.3cm}
\subsection{Evaluation of Group-Subspace Capture and Conditioning}
\label{sec:results_stage2}
After group selection, the acquisition subspace in~\eqref{eq:group_subspace} should preserve the channel energy targeted by Proposition~\ref{prop:svd_opt}, while the acquired effective channels should remain well conditioned for RZF beamforming. We evaluate these two aspects separately.

\subsubsection{Channel Capture}
To validate the preservation for users' channel energy, the capture efficiency $\chi_k$ is computed according to~\eqref{eq:chi_true}. Fig.~\ref{fig:chi_vs_M} reports the average capture efficiency $\bar{\chi}$ over the scheduled users as a function of the group-subspace dimension $M$, for $K\in\{2,4,8\}$ in the four scenarios. 

\begin{figure}[t]
	\centering
	\subfloat[ASU]{
    	\includegraphics[width=0.44\linewidth]{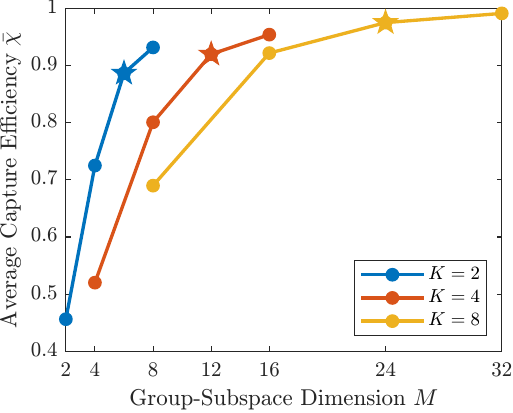}}
  	\hfill
  	\subfloat[Boston]{
    	\includegraphics[width=0.44\linewidth]{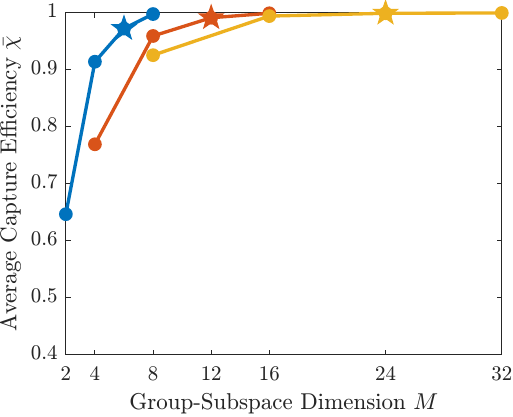}}\\
  	\subfloat[New York]{
    	\includegraphics[width=0.44\linewidth]{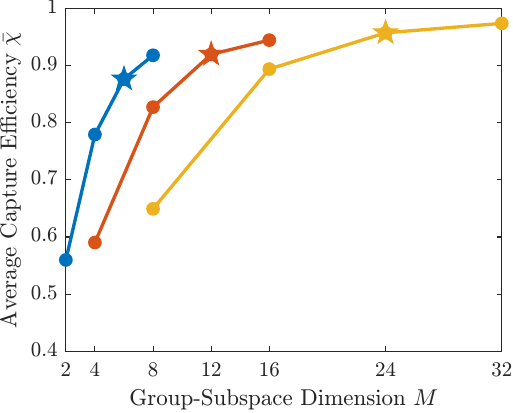}}
  	\hfill
  	\subfloat[Chicago]{
    	\includegraphics[width=0.44\linewidth]{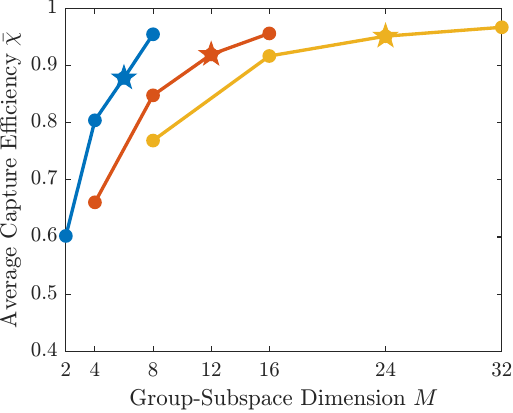}}
  	\caption{Average capture efficiency $\bar{\chi}$ versus group-subspace dimension $M$ for $K\in\{2,4,8\}$, shown per scenario. Stars highlight $M=3K$.}
  	\label{fig:chi_vs_M}
  	\vspace{-0.5cm}
\end{figure}

\indent The curves in Fig.~\ref{fig:chi_vs_M} show a consistent trend across all scenarios: $\bar{\chi}$ increases rapidly as $M$ grows and then gradually saturates. At $M=3K$, the average capture efficiency is close to or above $0.9$ in all cases. The gain from further increasing $M$ to $4K$ is small, indicating that the SVD-based group subspace has already retained most of the dominant channel energy. Thus, $M=3K$ serves as a robust rate-preserving operating point for group-subspace acquisition, while a smaller $M$ may still be preferable when online overhead dominates. The scenario dependence is mainly visible at smaller $M$: Boston is easier to capture, whereas ASU, New York, and Chicago require a larger group subspace depending on $K$.

\subsubsection{Effective-Channel Conditioning}
High capture efficiency alone does not guarantee reliable MU beamforming, because the projected user channels must also remain sufficiently separated as characterized by \textbf{Theorem~\ref{thm:main}}. We therefore evaluate $\log_{10}\kappa(\widetilde{\mathbf{G}})$, where $\widetilde{\mathbf{G}}$ is the normalized effective Gram matrix defined in~\eqref{eq:gram}. The experiment uses $K=4$ and $M=3K$ and compares random grouping, the proposed pre-CSI grouping, full-CSI SUS, and the ideal-subspace reference. ``Ideal Subspace'' denotes a reference that keeps the proposed pre-CSI selected group but constructs the acquisition and beamforming subspace from the true dominant user subspaces rather than the RSRP-predicted ones. It is used only to separate subspace-prediction loss from the subsequent low-dimensional processing.

\begin{figure}[t]
	\centering
	\subfloat[ASU]{
		\includegraphics[width=0.44\linewidth]{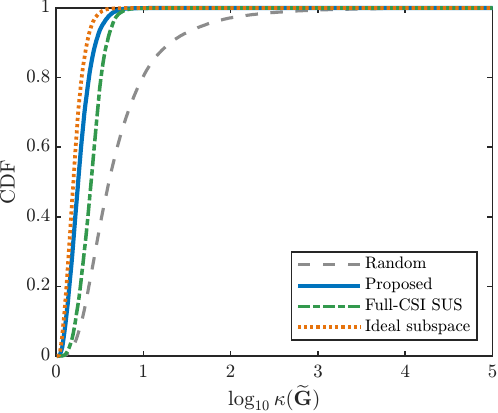}}
	\hfill
	\subfloat[Boston]{
		\includegraphics[width=0.44\linewidth]{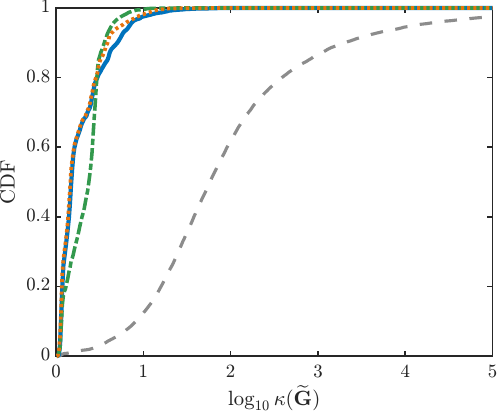}}\\
	\subfloat[New York]{
		\includegraphics[width=0.44\linewidth]{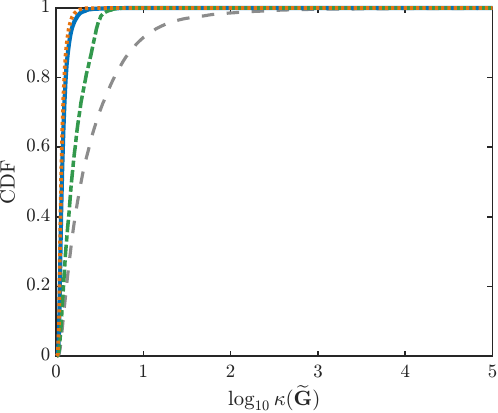}}
	\hfill
	\subfloat[Chicago]{
		\includegraphics[width=0.44\linewidth]{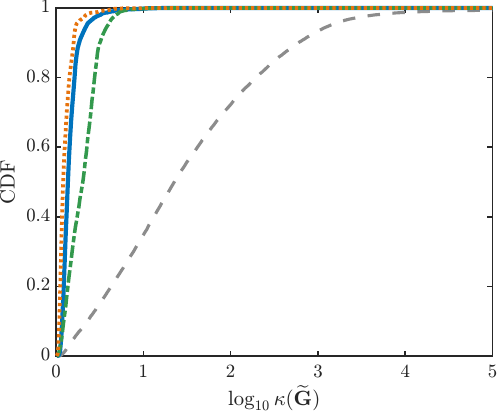}}
	\caption{CDF of the effective-channel conditioning metric 	$\log_{10}\kappa(\widetilde{\mathbf{G}})$.}
	\label{fig:cond_cdf}
	\vspace{-0.5cm}
\end{figure}

Fig.~\ref{fig:cond_cdf} shows that random grouping has a pronounced right-tail, especially in Boston and Chicago, indicating that some scheduled groups lead to nearly dependent effective channels after projection. In contrast, the proposed method yields cumulative distribution function (CDF) curves concentrated near the left side of the axis and closely follows the full-CSI SUS and ideal-subspace references in all scenarios. This confirms that the proposed grouping and group-subspace construction not only retain most of the channel energy, but also preserve a well-conditioned effective channel geometry for low-dimensional RZF beamforming.

\subsection{Evaluation of Effective Rate}
\label{sec:results_stage4}
Finally, we evaluate whether compatibility inference, group acquisition, and low-dimensional RZF translate into the overhead-aware objective in~\eqref{eq:ref_prob}. $K=8$ is set for this evaluation. The achievable MU sum-rate $R_{\rm sum}$ is first reported to isolate the rate loss without overhead. We then evaluate the final effective rate $R_{\rm eff}$ as a function of the coherence block length $T_c$, which directly reflects the tradeoff between CSI acquisition overhead and data transmission.
\begin{figure}[t]
	\centering
	\subfloat[ASU]{
		\includegraphics[width=0.44\linewidth]{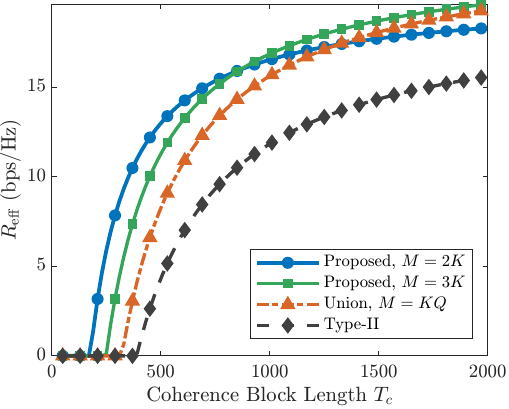}}
	\hfill
	\subfloat[Boston]{
		\includegraphics[width=0.44\linewidth]{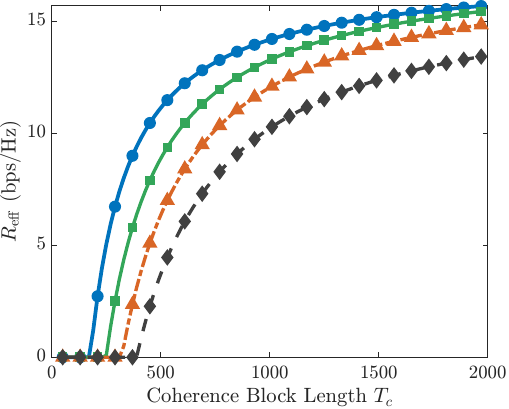}}\\
	\subfloat[New York]{
		\includegraphics[width=0.44\linewidth]{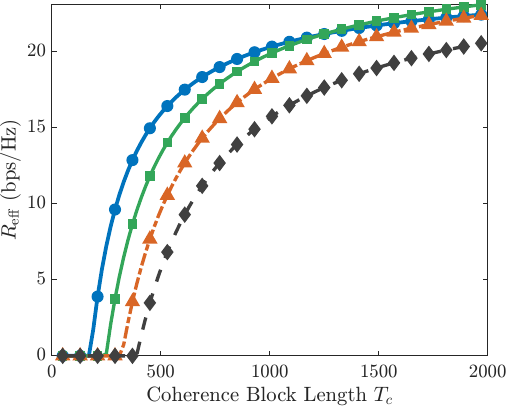}}
	\hfill
	\subfloat[Chicago]{
		\includegraphics[width=0.44\linewidth]{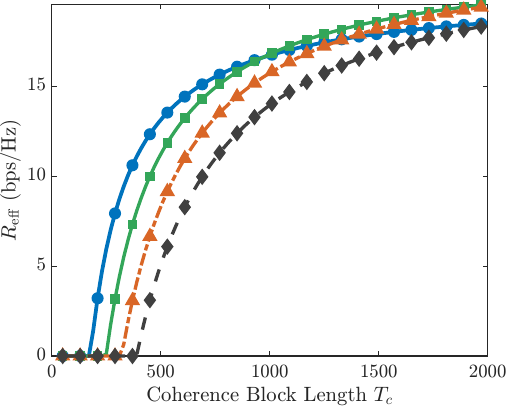}}
	\caption{Effective-rate comparison versus coherence block length $T_c$.}
	\label{fig:reff_vs_tc}
	\vspace{-0.3cm}
\end{figure}
\begin{table}[t]
	\centering
	\caption{Comparison of achievable MU sum-rate}
	\label{tab:rsum_peel_k8}
	\renewcommand{\arraystretch}{1.12}
	\setlength{\tabcolsep}{3pt}
	\begin{tabular*}{\columnwidth}{@{\extracolsep{\fill}}lcccc}
		\toprule
		Scheme & ASU & Boston & NewYork & Chicago \\
		\midrule
		Full-CSI SUS             & 22.23 & 20.44 & 25.48 & 24.38 \\
		Post-Group CSI  & 23.24 & 17.74 & 27.06 & 24.24 \\
		Union Subspace    & 23.03 & 17.73 & 26.73 & 23.18 \\
		\textbf{\boldmath Proposed, $M=3K$}          & \textbf{22.46} & \textbf{17.67} & \textbf{26.41} & \textbf{22.35} \\
		\textbf{\boldmath Proposed, $M=2K$}          & \textbf{20.11} & \textbf{17.23} & \textbf{24.64} & \textbf{20.32} \\
		Type-II                   & 19.38 & 16.73 & 25.60 & 22.84 \\
		\bottomrule
	\end{tabular*}
	\vspace{-0.5cm}
\end{table}

Table~\ref{tab:rsum_peel_k8} separates the raw-rate impact of grouping, subspace prediction, and dimension compression. ``Full-CSI SUS'' uses perfect CSI for grouping and full-dimensional RZF, but it is not a strict upper bound because SUS favors channel orthogonality rather than directly maximizing the final RZF sum-rate; hence ``Post-Group CSI'' can slightly exceed it in ASU and New York. ``Post-Group CSI'' isolates the grouping effect by using the proposed pre-CSI group with perfect post-group CSI, while ``Union Subspace'' further includes the loss from replacing full antenna-domain CSI by the predicted user-subspace union. The proposed method with $M=3K$ stays within $1$~bps/Hz of ``Union Subspace'' in all scenarios, showing that the compact group subspace is largely rate-preserving. The $M=2K$ setting incurs a larger raw-rate loss but reduces CSI-RS and feedback dimensions, making it more suitable for overhead-limited regimes.
\\ \indent Fig.~\ref{fig:reff_vs_tc} evaluates the rate-overhead tradeoff versus coherence block length. The proposed framework consistently outperforms Type-II feedback, confirming the benefit of reducing both CSI-RS and feedback dimensions. For short and moderate $T_c$, $M=2K$ often gives the highest $R_{\rm eff}$ because its lower overhead compensates for the raw-rate loss in Table~\ref{tab:rsum_peel_k8}. As $T_c$ increases, the rate-preserving setting $M=3K$ catches up with or exceeds $M=2K$ in several scenarios. The union-subspace reference has competitive raw rate but is less effective due to its $KQ$-dimensional acquisition overhead. Thus, $M=2K$ is preferable in overhead-limited regimes, while $M=3K$ provides a more rate-preserving operating point.

\section{Conclusion}
\label{sec:conclusion}
This paper developed a site-specific MU-MIMO beamforming framework that exploits SSB-stage RSRP fingerprints to reduce the CSI acquisition and feedback burden of limited-feedback MU transmission. The BS first infers user subspaces from RSRP fingerprints, selects a compatible group through pre-CSI subspace-overlap metrics, and then constructs a compact group CSI-RS subspace for effective-channel feedback and low-dimensional RZF beamforming. Theoretical analysis links predicted subspace overlap to CSI-domain channel coupling, establishes the projection-energy optimality of the group-subspace construction, and relates channel capture to effective-channel conditioning. Numerical results on DeepMIMO scenarios verify this design chain and show that the proposed low-overhead acquisition achieves higher effective rates than Type-II feedback in short and moderate coherence blocks while substantially reducing user-side processing.
\\ \indent 
Future work may extend the proposed framework to rank-adaptive multi-antenna users. Another important direction is online adaptation of the scheduled group size and acquisition dimension according to coherence length, traffic load, and prediction uncertainty. Extending the site-specific acquisition principle to multi-cell interference-aware coordination is also a promising direction.

\balance
\bibliographystyle{IEEEtran}
\bibliography{refs}

@ARTICLE{Heath2016mmWave,
	author={Heath, Robert W. and Gonz{\'a}lez-Prelcic, Nuria and Rangan, Sundeep and Roh, Wonil and Sayeed, Akbar M.},
	journal={IEEE J. Sel. Topics Signal Process.},
	title={An Overview of Signal Processing Techniques for Millimeter Wave {MIMO} Systems},
	year={2016},
	month={Feb.},
	volume={10},
	number={3},
	pages={436-453},
	doi={10.1109/JSTSP.2016.2523924}}

@ARTICLE{Christensen2008WMMSE,
	author={Christensen, Søren Skovgaard and Agarwal, Rajiv and De Carvalho, Elisabeth and Cioffi, John M.},
	journal={IEEE Trans. Wireless Commun.},
	title={Weighted Sum-Rate Maximization Using Weighted {MMSE} for {MIMO-BC} Beamforming Design},
	year={2008},
	month={Dec.},
	volume={7},
	number={12},
	pages={4792-4799},
	doi={10.1109/T-WC.2008.070851}}

@ARTICLE{Sun2018NOMAMIMO,
	author={Sun, Xiaofang and Yang, Nan and Yan, Shihao and Ding, Zhiguo and Ng, Derrick Wing Kwan and Shen, Chao and Zhong, Zhangdui},
	journal={IEEE Trans. Wireless Commun.},
	title={Joint Beamforming and Power Allocation in Downlink {NOMA} Multiuser {MIMO} Networks},
	year={2018},
	month={Aug.},
	volume={17},
	number={8},
	pages={5367-5381}}

@ARTICLE{Zhao2026PRAFD,
	author={Zhao, Chengjie and Gong, Yuanzhe and Le-Ngoc, Tho},
	journal={IEEE Trans. Wireless Commun.},
	 title={Joint Positioning, Beamforming, and Power Allocation in Full-Duplex MIMO With Position-Reconfigurable Antenna Arrays}, 
	year={2026},
	volume={25},
	month={Jun.},
	pages={18750-18763},
	doi={10.1109/TWC.2026.3703296}}

@ARTICLE{Love2008LimitedFeedback,
	author={Love, David J. and Heath, Robert W. and N. Lau, Vincent K. and Gesbert, David and Rao, Bhaskar D. and Andrews, Matthew},
	journal={IEEE J. Sel. Areas Commun.},
	title={An Overview of Limited Feedback in Wireless Communication Systems},
	year={2008},
	month={Oct.},
	volume={26},
	number={8},
	pages={1341-1365},
	doi={10.1109/JSAC.2008.081002}}

@ARTICLE{Larsson2014MassiveMIMO,
	author={Larsson, Erik G. and Edfors, Ove and Tufvesson, Fredrik and Marzetta, Thomas L.},
	journal={IEEE Commun. Mag.},
	title={Massive {MIMO} for Next Generation Wireless Systems},
	year={2014},
	month={Feb.},
	volume={52},
	number={2},
	pages={186-195},
	doi={10.1109/MCOM.2014.6736761}}

@ARTICLE{Spencer2004ZF,
	author={Spencer, Quentin H. and Swindlehurst, A. Lee and Haardt, Martin},
	journal={IEEE Trans. Signal Process.},
	title={Zero-Forcing Methods for Downlink Spatial Multiplexing in Multiuser {MIMO} Channels},
	year={2004},
	month={Feb.},
	volume={52},
	number={2},
	pages={461-471},
	doi={10.1109/TSP.2003.821107}}

@ARTICLE{Yoo2006SUS,
	author={Yoo, Taesang and Goldsmith, Andrea},
	journal={IEEE J. Sel. Areas Commun.},
	title={On the Optimality of Multiantenna Broadcast Scheduling Using Zero-Forcing Beamforming},
	year={2006},
	month={Feb.},
	volume={24},
	number={3},
	pages={528-541},
	doi={10.1109/JSAC.2006.872709}}

@ARTICLE{Gonzalez1985FarthestFirst,
	author={Gonzalez, Teofilo F.},
	journal={Theor. Comput. Sci.},
	title={Clustering to Minimize the Maximum Intercluster Distance},
	year={1985},
	volume={38},
	pages={293-306},
	doi={10.1016/0304-3975(85)90224-5}}

@ARTICLE{Jindal2006Feedback,
	author={Jindal, Nihar},
	journal={IEEE Trans. Inf. Theory},
	title={{MIMO} Broadcast Channels With Finite-Rate Feedback},
	year={2006},
	month={Nov.},
	volume={52},
	number={11},
	pages={5045-5060},
	doi={10.1109/TIT.2006.882421}}

@ARTICLE{Wagner2012RZF,
	author={Wagner, Sebastian and Couillet, Romain and Debbah, M{\'e}rouane and Slock, Dirk T. M.},
	journal={IEEE Trans. Inf. Theory},
	title={Large System Analysis of Linear Precoding in Correlated {MISO} Broadcast Channels Under Limited Feedback},
	year={2012},
	month={Jul.},
	volume={58},
	number={7},
	pages={4509-4537},
	doi={10.1109/TIT.2012.2192300}}

@techreport{3GPPTR38901,
	author      = {{3GPP}},
	title       = {Study on Channel Model for Frequencies from 0.5 to 100 {GHz}},
	institution = {3rd Generation Partnership Project (3GPP)},
	type        = {Technical Report},
	number      = {TR 38.901},
	year        = {2022}
}

@techreport{3GPP38211,
	author      = {{3GPP}},
	title       = {{NR}; Physical Channels and Modulation},
	institution = {3rd Generation Partnership Project (3GPP)},
	type        = {Technical Specification},
	number      = {TS 38.211},
	year        = {2018}
}

@techreport{3GPP38214,
	author      = {{3GPP}},
	title       = {{NR}; Physical Layer Procedures for Data},
	institution = {3rd Generation Partnership Project (3GPP)},
	type        = {Technical Specification},
	number      = {TS 38.214},
	year        = {2018}
}

@techreport{3GPP38215,
	author      = {{3GPP}},
	title       = {{NR}; Physical Layer Measurements},
	institution = {3rd Generation Partnership Project (3GPP)},
	type        = {Technical Specification},
	number      = {TS 38.215},
	year        = {2018}
}

@ARTICLE{Giordani2019BeamManagementNR,
	author={Giordani, Marco and Polese, Michele and Roy, Arnab and Castor, Douglas and Zorzi, Michele},
	journal={IEEE Commun. Surv. Tutorials},
	title={A Tutorial on Beam Management for {3GPP NR} at mm{W}ave Frequencies},
	year={2019},
	month={Sep},
	volume={21},
	number={1},
	pages={173-196},
	doi={10.1109/COMST.2018.2869411}}

@ARTICLE{Heng2022SiteSpecificProbing,
	author={Heng, Yuqiang and Mo, Jianhua and Andrews, Jeffrey G.},
	journal={IEEE Trans. Wireless Commun.},
	title={Learning Site-Specific Probing Beams for Fast mm{W}ave Beam Alignment},
	year={2022},
	month={Jan.},
	volume={21},
	number={8},
	pages={5785-5800},
	doi={10.1109/TWC.2022.3143121}}

@ARTICLE{Zeng2020CKM,
	author={Zeng, Yong and Xu, Xiaoli},
	journal={IEEE Wireless Commun.}, 
	title={Toward Environment-Aware {6G} Communications via Channel Knowledge Map}, 
	year={2021},
	month={Mar},
	volume={28},
	number={3},
	pages={84-91},
	doi={10.1109/MWC.001.2000327}}

@ARTICLE{Alkhateeb2018DLBeam,
	author={Alkhateeb, Ahmed and Alex, Sam and Varkey, Paul and Li, Ying and Qu, Qi and Tujkovic, Djordje},
	journal={IEEE Access},
	title={Deep Learning Coordinated Beamforming for Highly-Mobile Millimeter Wave Systems},
	year={2018},
	volume={6},
	pages={37328-37348},
	doi={10.1109/ACCESS.2018.2850226}}

@INPROCEEDINGS{Ning2023RSRPCodebook,
	author={Ning, Xinzhi and Zhang, Shutao and Xue, Ye and Zheng, Xi and Shi, Qingjiang and Chang, Tsung-Hui},
	booktitle={Proc. {IEEE} Int. Workshop Signal Process. Adv. Wireless Commun. (SPAWC)},
	title={Learning Beams Adaptive to the Environment: An {RSRP}-Based Codebook Design},
	year={2023},
	pages={521-525},
	doi={10.1109/SPAWC53906.2023.10304486}}

@article{sim,
	title={Generative Site-Specific Beamforming via Information-Maximizing Codebook},
	author={Zhao, Cheng-Jie and Wang, Zhaolin and Liu, Yuanwei},
	journal={arXiv preprint arXiv:2602.12552},
	year={2026}
}

@article{prior_singleuser,
	author={Zhao, Cheng-Jie and Wang, Zhaolin and Zhao, Zongyao and Liu, Yuanwei},
	title={Bridging Standardized Codebook and Site-Specific Beamforming: A Unified Limited-Feedback Framework},
	journal={arXiv preprint arXiv:2604.14524},
	year={2026}
}

@ARTICLE{Adhikary2013JSDM,
	author={Adhikary, Ansuman and Nam, Junyoung and Ahn, Jae-Young and Caire, Giuseppe},
	journal={IEEE Trans. Inf. Theory}, 
	title={Joint Spatial Division and Multiplexing—The Large-Scale Array Regime}, 
	year={2013},
	month={Oct.},
	volume={59},
	number={10},
	pages={6441-6463},
	doi={10.1109/TIT.2013.2269476}}

@ARTICLE{Peel2005RZF,
	author={Peel, Christian B. and Hochwald, Bertrand M. and Swindlehurst, A. Lee},
	journal={IEEE Trans. Commun.},
	title={A Vector-Perturbation Technique for Near-Capacity Multiantenna Multiuser Communication---{P}art {I}: Channel Inversion and Regularization},
	year={2005},
	month={Jan.},
	volume={53},
	number={1},
	pages={195-202},
	doi={10.1109/TCOMM.2004.840354}}

@ARTICLE{Ayach2014SpatiallySparse,
	author={Ayach, Omar El and Rajagopal, Sridhar and Abu-Surra, Shadi and Pi, Zhouyue and Heath, Robert W.},
	journal={IEEE Trans. Wireless Commun.},
	title={Spatially Sparse Precoding in Millimeter Wave {MIMO} Systems},
	year={2014},
	month={Mar.},
	volume={13},
	number={3},
	pages={1499-1513},
	doi={10.1109/TWC.2014.011714.130846}}

@INPROCEEDINGS{Alkhateeb2019DeepMIMO,
	author={Alkhateeb, Ahmed},
	booktitle={Proc. Inf. Theory Appl. Workshop (ITA)},
	title={{DeepMIMO}: A Generic Deep Learning Dataset for Millimeter Wave and Massive {MIMO} Applications},
	year={2019},
	month={Feb.},
	pages={1-8},
	address={San Diego, CA}}

@article{SiFo,
	title={{SiFo}: Wireless Foundation Model for Low-Overhead Site-Specific {CSI} Feedback},
	author={Zhao, Cheng-Jie and Wang, Zhaolin and Zhao, Zongyao and Liu, Yuanwei},
	journal={arXiv preprint arXiv:2605.16141},
	year={2026}
}

@ARTICLE{Zhao2025LDS,
	author={Zhao, Xiaotong and Shi, Qingjiang},
	journal={IEEE Trans. Signal Process.},
	title={A Universal Low-Dimensional Subspace Structure in Beamforming Design: Theory and Applications},
	year={2025},
	month={Apr.},
	volume={73},
	pages={1775-1791}}

@book{HornJohnson2012MatrixAnalysis,
	author={Horn, Roger A. and Johnson, Charles R.},
	title={Matrix Analysis},
	edition={2},
	publisher={Cambridge University Press},
	address={Cambridge, U.K.},
	year={2012}}

@book{GolubVanLoan2013MatrixComputations,
	author={Golub, Gene H. and Van Loan, Charles F.},
	title={Matrix Computations},
	edition={4},
	publisher={Johns Hopkins University Press},
	address={Baltimore, MD, USA},
	year={2013}}

@article{Tropp2007OMP,
	author={Tropp, Joel A. and Gilbert, Anna C.},
	journal={IEEE Trans. Inf. Theory},
	title={Signal Recovery From Random Measurements Via Orthogonal Matching Pursuit},
	year={2007},
	month={Dec.},
	volume={53},
	number={12},
	pages={4655-4666},
	doi={10.1109/TIT.2007.909108}}

\end{document}